\newcommand{\R}{{\mathord{\mathbb R}}}
\newcommand{\Z}{{\mathord{\mathbb Z}}}
\newcommand{\N}{{\mathord{\mathbb N}}}
\newcommand{\C}{{\mathord{\mathbb C}}}
\newcommand{\E}{{\mathord{\mathbb E}}}
\newcommand{\SH}{{\mathord{\mathbb{SH}}}}
\newcommand{\HH}{\mathcal{H}}
\newcommand{\ran}{{\rm Ran}}
\newcommand{\ben}{\begin{displaymath}}
\newcommand{\een}{\end{displaymath}}
\newcommand{\beqn}{\begin{equation}}
\newcommand{\eeqn}{\end{equation}}
\newcommand{\beqna}{\begin{eqnarray*}}
\newcommand{\eeqna}{\end{eqnarray*}}
\def\supp{\operatorname{supp}}
\newtheorem{lemma}{Lemma}
\newtheorem{theorem}[lemma]{Theorem}
\begin{document}

\title{On the AC spectrum of one-dimensional random Schr\"odinger operators with
matrix-valued potentials}

\author[R.~Froese]{Richard Froese}
\address{Department of Mathematics\\
University of British Columbia\\
Vancouver, British Columbia, Canada}
\email{rfroese@math.ubc.ca}
\author[D.~Hasler]{David Hasler}
\address{Department of Mathematics\\
College of William \& Mary\\
Williamsburg, Virginia, USA}
\email{dghasler@wm.edu}
\author[W.~Spitzer]{Wolfgang Spitzer}
\address{Institut f\"ur Theoretische Physik\\
    Universit\"at Erlangen--N\"urnberg\\
    Erlangen, Germany}
    \email{wolfgang.spitzer@physik.uni-erlangen.de}

\date{October 7, 2009}

\begin{abstract}
We consider discrete one-dimensional random Schr\"odinger operators with decaying
matrix-valued, independent potentials. We show that if the $\ell^2$-norm of this
potential has finite expectation value with respect to the product measure then almost surely the Schr\"odinger operator has an interval of purely absolutely continuous (ac) spectrum. We apply this result to Schr\"odinger operators on a strip. This work provides a new proof and generalizes a result obtained by Delyon,
Simon, and Souillard~\cite{DSS85}.
\end{abstract}

\maketitle

\section{Model and Statement of Results}

In this paper we are interested in the absolutely continuous (ac) spectrum of
quasi one-dimensional random Schr\"o\-dinger operators with decaying potentials.
To this end, it is convenient to formulate the problem in terms of
matrix-valued potentials on the one-dimensional lattice, $\Z$.

Let us first introduce some standard notation that is used throughout this
paper. If $H$ is an operator on some Hilbert space $\HH$, then we denote by
$\rho(H),\sigma(H),\sigma_{\rm{ac}}(H),\sigma_{\rm{ess}}(H)$ its resolvent set,
spectrum, ac spectrum, respectively its essential spectrum. By $\|H\|$ we denote the operator norm of $H$.

For some $m\in\N$, let ${\rm Sym}(m)$ denote the set of real symmetric
$m \times m$ matrices. Let $D \in {\rm Sym}(m)$ be some fixed matrix and let
$q=(q_n)_{n\in\Z}$ be a family of independent ${\rm Sym}(m)$-valued random
variables. We assume here that (i) the mean of each random variable $q_n$ is
zero and (ii) there is a compact set $K\subset{\rm Sym}(m)$ so that the support
of  each $q_n$ is contained in $K$. By $\nu_n$ we denote the probability
measure of $q_n$. The probability measure for $q$ is then the product measure
$\nu = \otimes_{n\in\Z} \nu_n$. We use the notation $\E$ to denote the
expectation value with respect to this product measure, $\nu$.

On the Hilbert space $\ell^2(\Z;\C^m)$ (of $\C^m$-valued functions on $\Z$
equipped with the usual Euclidean norm) we consider the operator
\begin{equation}
H := \Delta + D + q\,,
\end{equation}
which is defined as
\begin{equation}\label{def:Hamiltonian}
(H \varphi)(n) := - \varphi(n-1) - \varphi(n+1) + D \,\varphi(n) + q_n\,
\varphi(n)\,, \quad \varphi\in \ell^2(\Z ; \C^m)\,,\; n \in \Z\,.
\end{equation}
To state the first result of this paper we introduce the following set which
depends on the (eigenvalues of the) constant ``potential'' $D$,
\begin{equation}\label{def:I_D}
I_D := \bigcap_{\lambda \in \sigma(D)} [\lambda - 2 , \lambda + 2 ]  \ .
\end{equation}
\begin{theorem} \label{thm:mainstrip} Let $\E[\sum_{n \in \Z} \| q_n \|^2 ] <
\infty$. Then almost surely $\sigma_{\rm ac}(H)\supseteq I_D$ and the spectrum
of $H$ is purely absolutely continuous in the interior of $I_D$.
\end{theorem}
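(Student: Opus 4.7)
The plan is to reduce the theorem to a boundedness statement for the symplectic transfer matrices of $H$, and then to establish that boundedness by a variance estimate that exploits independence, mean zero, and the finite expected squared $\ell^2$-norm of $\{q_n\}$. Rewrite the equation $H\varphi = E\varphi$ as a first-order recursion and collect it into transfer matrices $T_n(E)\in\mathrm{Sp}(2m,\R)$; set $\Phi_N(E) := T_N(E)\cdots T_1(E)$ (and analogously for $N<0$). The matrix version of subordinacy theory, worked out for symplectic cocycles via the matrix Weyl $M$-function formalism (Kotani, Carmona--Lacroix, Last--Simon), implies that on any Borel set $S$ on which $\sup_N\|\Phi_N(E)\|<\infty$ for Lebesgue-almost every $E\in S$, the spectrum of $H$ is purely absolutely continuous with closure $\overline{S}$. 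It therefore suffices to show that, $\nu$-almost surely, $\sup_N\|\Phi_N(E)\|<\infty$ for Lebesgue-almost every $E\in\mathrm{int}(I_D)$.

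Second, I would set up geometric coordinates. Diagonalising $D$ decouples the free ($q=0$) equation into $m$ scalar free Jacobi channels with shifted energies $E-\lambda$, $\lambda\in\sigma(D)$, and the definition~\eqref{def:I_D} of $I_D$ is tailored precisely so that for $E\in\mathrm{int}(I_D)$ every shifted energy lies in $(-2,2)$. The free transfer matrix is then conjugate to a block-diagonal rotation, elliptic in every channel, so one may pass to matrix Pr\"ufer coordinates, equivalently tracking the trajectory on the Siegel upper half-plane $\{M+iN : M,N\in\mathrm{Sym}(m),\,N>0\}$ on which $\mathrm{Sp}(2m,\R)$ acts by hyperbolic isometries. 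In these coordinates the free step is an isometry, so only the random perturbation $q_n$ can displace the trajectory and increase the hyperbolic distance $d_n(E)$ from a fixed basepoint; this distance controls $\log\|\Phi_n(E)\|$ up to a constant depending only on the basepoint.

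The random structure then enters through an expansion of the distance increment as $d_{n+1}(E)-d_n(E) = L_n(E)[q_n] + R_n(E,q_n)$, where $L_n(E)[\cdot]$ is $\R$-linear with $q_n$-expectation zero (using $\E q_n = 0$ and the fact that the free step was an isometry) and $|R_n(E,q_n)| \le C(E)\|q_n\|^2$ with $C(E)$ bounded on compact subsets of $\mathrm{int}(I_D)$. Telescoping, squaring, and applying Doob's martingale inequality to the linear part, together with summability of the quadratic remainders under the hypothesis $\E\sum_n\|q_n\|^2<\infty$, gives $\E\sup_N d_N(E)^2 \le C(E)\,\E\sum_n\|q_n\|^2$ pointwise in $E\in\mathrm{int}(I_D)$. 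Integrating in $E$ over a compact $J\subseteq\mathrm{int}(I_D)$ and exhausting $\mathrm{int}(I_D)$ by such $J$'s yields the required almost-sure boundedness of $\|\Phi_N(E)\|$, and step one delivers the conclusion. The main obstacle is the matrix Pr\"ufer setup of the second step: one must choose coordinates in which the free step is a simultaneous isometry across all $m$ channels at a given $E$, and in which the linear response $L_n(E)[q_n]$ genuinely averages to zero in $q_n$. In contrast with the scalar case, the ``angle'' variable is $\mathrm{Sp}(2m,\R)/\mathrm{U}(m)$-valued, and one needs the infinitesimal hyperbolic displacement produced by a symmetric perturbation to be a mean-zero function of these angular coordinates; once that is arranged, the quadratic estimate on $R_n$ reduces to a direct calculation using ellipticity, and the endpoint degeneracy of $C(E)$ at $\partial I_D$, where an elliptic channel becomes parabolic, is harmless since only Lebesgue-a.e.\ boundedness on $\mathrm{int}(I_D)$ is required.
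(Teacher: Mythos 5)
Your overall strategy --- real-energy transfer matrices in $\mathrm{Sp}(2m,\R)$, matrix Pr\"ufer/Siegel coordinates, and a mean-zero-plus-quadratic expansion of the hyperbolic displacement --- is the Delyon--Simon--Souillard/Kiselev--Last--Simon route, and it is genuinely different from the paper's, which never leaves the open upper half-plane: the paper works with the forward and backward Green's functions $G_n^{\pm}$ at ${\rm Im}\,\lambda>0$, where $\Phi_0$ is a strict hyperbolic contraction (so the orbit converges and no $\sup_N$ control is ever needed), proves $\sup_{\lambda\in J+i(0,1]}\E[{\rm cd}_\lambda^2(G_n^{\pm})]<\infty$, and converts this uniform second-moment bound on ${\rm Im}\,G_n(E+i\epsilon)$ directly into pure absolute continuity via the criterion of Klein and Lemma 1 of \cite{FHS09}. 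Your step one, however, is not correct as stated: knowing $\sup_N\|\Phi_N(E)\|<\infty$ for \emph{Lebesgue}-a.e.\ $E\in S$ does yield $S\subseteq\sigma_{\rm ac}(H)$ (Carmona, Last--Simon), but it does \emph{not} exclude singular spectrum embedded in $S$, because a singular measure can be carried entirely by the Lebesgue-null exceptional set of energies. To kill the singular part one needs boundedness for every $E$ outside a set that is null for \emph{all} spectral measures (Jitomirskaya--Last/Last--Simon), and your Fubini argument only delivers a Lebesgue-null exceptional set. This is exactly what the Green's-function criterion buys the paper: the uniform bound as ${\rm Im}\,\lambda\downarrow 0$ produces the ac part and excludes the singular part simultaneously.

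The central estimate is also thinner than it looks. Writing $d_{n+1}-d_n=L_n(E)[q_n]+R_n$ with $L_n$ conditionally mean zero and $|R_n|\le C(E)\|q_n\|^2$, with $C(E)$ and the size of $L_n$ \emph{independent of the current position of the orbit}, is precisely the hard point you defer to the end: the naive Euclidean perturbation $Z\mapsto Z-q_n$ has hyperbolic length of order $\|({\rm Im}\,Z_n)^{-1}\|\,\|q_n\|$, which blows up exactly when the orbit approaches the boundary of $\SH_m$, i.e.\ when $d_n$ is large --- the quantity being bounded. The paper meets the same circularity and resolves it with the inequality $a(Z,\delta)^2\le 4\,{\rm cd}_\lambda(Z)\,b(Z,\delta)$ of Lemma \ref{lem:2}, which forces a \emph{multiplicative} bound $\E[{\rm cd}_\lambda^2+1]\le\prod_i(1+C_0\,\E\|q_i\|^2)\le\exp(C_0\sum_i\E\|q_i\|^2)$ rather than the additive bound $\E\sup_N d_N^2\le C(E)\,\E\sum_n\|q_n\|^2$ you assert; likewise your Doob step needs the martingale $\sum_n L_n(E)[q_n]$ to be $L^2$-bounded, which again requires position-independent control of $L_n$. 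These points are repairable (scalar Pr\"ufer variables do give position-independent coefficients, and a matrix analogue can be built from the isometry of the free step), but as written the proposal assumes the two estimates that constitute the actual content of the proof.
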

Theorem \ref{thm:mainstrip} will be used to prove the second result of this
paper.
\begin{theorem} \label{thm:mainstripcor} Let $\E[\sum_{n \in \Z} \| q_n \|^2 ]
< \infty$. Then almost surely $\sigma_{\rm ac}(H)\supseteq \sigma(\Delta + D)$.
\end{theorem}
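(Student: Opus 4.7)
The plan is to decompose $\sigma(\Delta+D)$ into intervals around individual eigenvalues of $D$, reduce to one eigenvalue at a time, and apply Theorem~\ref{thm:mainstrip} to a one-channel auxiliary operator. A direct Fourier calculation on the tensor decomposition $\ell^2(\Z;\C^m)\simeq\ell^2(\Z)\otimes\C^m$ yields
\[
\sigma(\Delta+D)\;=\;\sigma(\Delta)+\sigma(D)\;=\;\bigcup_{\mu\in\sigma(D)}[\mu-2,\mu+2].
\]
Since $\sigma(D)$ is finite and $\sigma_{\rm ac}(H)$ is closed under countable unions, it suffices to show $[\mu_0-2,\mu_0+2]\subseteq\sigma_{\rm ac}(H)$ almost surely for each fixed $\mu_0\in\sigma(D)$.

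Fix such a $\mu_0$, let $P_0\in\mathrm{Sym}(m)$ be the orthogonal projection onto the $\mu_0$-eigenspace of $D$, set $d_0:=\dim\ran P_0$, and put $\Pi:=I_{\ell^2(\Z)}\otimes P_0$. On the reduced Hilbert space $\HH_0:=\ran\Pi\cong\ell^2(\Z;\C^{d_0})$ introduce the auxiliary operator
\[
\w H \;:=\; \Delta + \mu_0\,\I + \w q,\qquad \w q_n := P_0\, q_n\, P_0.
\]
The projected potential $\w q_n$ takes values in the symmetric operators on $\ran P_0$, is independent in $n$, has mean zero, bounded support, and satisfies $\E[\sum_n\|\w q_n\|^2]\le\E[\sum_n\|q_n\|^2]<\infty$. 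Since $\sigma(\mu_0\I)=\{\mu_0\}$, the set $I_{\mu_0\I}$ from \eqref{def:I_D} is exactly $[\mu_0-2,\mu_0+2]$, and Theorem~\ref{thm:mainstrip} applied to $\w H$ yields $\sigma_{\rm ac}(\w H)\supseteq[\mu_0-2,\mu_0+2]$ almost surely.

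The main obstacle is transferring this ac spectrum from the reduced operator $\w H$ to the full operator $H$. One has $\Pi H\Pi=\w H$ on $\HH_0$, and $H$ differs from the block-diagonal operator $\w H\oplus\w H^\perp$ (with $\w H^\perp:=(I-\Pi)H(I-\Pi)$ on $(I-\Pi)\HH$) by the off-diagonal coupling $K:=\Pi q(I-\Pi)+(I-\Pi)q\Pi$. Under the $\ell^2$-hypothesis $K$ is almost surely Hilbert--Schmidt but in general not trace class, so the Kato--Rosenblum theorem cannot be applied as a black box. Instead one exploits the open/closed channel dichotomy at energies $E$ in the interior of $[\mu_0-2,\mu_0+2]$: the free transfer matrix of $\Delta+D$ has $d_0$ eigenvalues on the unit circle (the open $\mu_0$-channel) and $2(m-d_0)$ hyperbolic eigenvalues (the closed channels). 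Bounded Pr\"ufer-type solutions produced on $\HH_0$ by the proof of Theorem~\ref{thm:mainstrip} are extended to polynomially bounded generalized eigenfunctions of $H$ by solving the full Schr\"odinger equation on $\C^m$, with the closed-channel components kept under control by their exponential decay. A Jitomirskaya--Last subordinacy argument then places $E\in\sigma_{\rm ac}(H)$. Taking the union over the finitely many $\mu_0\in\sigma(D)$ completes the proof.
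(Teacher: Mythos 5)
Your first half is sound and matches the paper's strategy in spirit: decompose $\sigma(\Delta+D)=\bigcup_\mu[\mu-2,\mu+2]$, project onto a spectral subspace of $D$, and apply Theorem~\ref{thm:mainstrip} to the reduced operator (for a single eigenvalue $\mu_0$ one indeed gets $I_{\mu_0 \mathbb{I}}=[\mu_0-2,\mu_0+2]$). You also correctly identify the crux: the off-diagonal coupling is only Hilbert--Schmidt, so Kato--Rosenblum does not apply directly. But your resolution of that crux is a genuine gap on two counts. First, the transfer argument you sketch is not a proof. The paper's proof of Theorem~\ref{thm:mainstrip} controls Green's functions, not solutions of the difference equation, so there are no ``bounded Pr\"ufer-type solutions'' available as input; and even granting a bounded solution of the reduced equation $\widehat H\psi=E\psi$, you cannot simply ``extend'' it to a solution of $H\Psi=E\Psi$, because the full equation couples the channels through $q$ at every site --- one must solve a genuinely coupled system, select decaying branches of the hyperbolic (closed) channels on both half-lines simultaneously, and then still convert existence of such solutions into absolute continuity of spectral measures. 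None of this is carried out, and it is exactly the hard part.

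Second, your decomposition by \emph{individual} eigenvalues makes the closed-channel control false as stated: for $E$ in the part of $(\mu_0-2,\mu_0+2)$ that overlaps some other band $[\mu_1-2,\mu_1+2]$, the $\mu_1$-channels are open, not hyperbolic, so ``exponential decay of the closed-channel components'' fails there. This is precisely why the paper decomposes $\sigma(\Delta+D)$ into maximal intervals $I$ on which the number $m(\lambda)$ of open channels is constant, and projects with $P_I$ onto \emph{all} channels open on $I$; only then does $\overline{P}_I(\Delta+D+q)\overline{P}_I$ have essential spectrum outside the interior of $I$, which is the hypothesis needed for the Hilbert--Schmidt transfer result (Theorem~\ref{thm:denisov}, the Denisov/Albeverio--Konstantinov theorem proved in Appendix~C) that replaces Kato--Rosenblum. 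To repair your proof, adopt the constant-$m(\lambda)$ decomposition and invoke such a theorem; as written, the argument does not close.
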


\vspace{0.5cm}
\noindent 
{\bf Remarks:} The case of a random potential with $m=1$ has been analyzed in
great detail by Delyon, Simon, and Souillard~\cite{DSS85}. For $m=1$ they not only prove
Theorem \ref{thm:mainstrip} (even under weaker conditions on the measures
$\nu_n$) but also that the rate of decay of $q_n$ is necessary in order to have
absolutely continuous spectrum. A deterministic version of a result in the direction of
Theorem  \ref{thm:mainstripcor} has been
announced by Molchanov and Vainberg~\cite{MV04} but, to  the best of our
knowledge, has not yet been published. Other previous work by Kirsch, Krishna, Obermeit and Sinha on decaying
potentials can be found in \cite{Kr}, \cite{KKO} and \cite{KrS}. We would also like to mention also the work of
Kotani and Simon~\cite{KS} and Schulz-Baldes~\cite{Schulz-Baldes} on random
Schr\"odinger operators on the strip.

\vspace{0.5cm}

\noindent
{\bf Example.} The most important application of Theorem \ref{thm:mainstripcor}
is to Schr\"odinger operators on a strip. More generally, let $\mathcal C :=
\{1,2,\ldots,L\}^d$ denote the discrete $d$-dimensional cube with side length
$L$. Then $\ell^2(\mathcal C)\cong \C^m$ with $m = L^d$ and 
$\ell^2(\Z;
\ell^2( \mathcal C) )\cong\ell^2(\Z\times\mathcal C)$.
We introduce the multi-index $\underline{n} :=
(n_1,n_2,\ldots,n_d) \in \Z^d$ with $|\underline{n}|:=|n_1|+|n_2|+\cdots+|n_d|$.
Let $D$ be the Dirichlet Laplacian on $\mathcal C$, i.e., for all
$\underline{n} \in \mathcal C$,
$$
(D \,\psi)(\underline{n}) := -\sum_{\underline{m}\in \mathcal C:
|\underline{m}-\underline{n}|=1}  \psi(\underline{m}) \,,\quad
\psi \in \ell^2(\mathcal C) \, .
$$
Note that $\Delta + D$ is 
equivalent to the
(nearest neighbor) Dirichlet Laplace operator on $\ell^2(\Z\times\mathcal C)$.
The eigenvalues of $D$ are indexed by $\underline{n} \in \mathcal C$ and are
given by $- 2\,\sum_{i=1}^d \cos (\pi n_i / (L+1))$. Observe that
$$
I_D = \big\{\lambda \in \R \ :  \ |\lambda | \leq 2 \big[ 1 - d \cos (\pi
/(L+1))\big]\big\} \,.
$$
If $d \geq 2$ this set is empty unless $L=1$. If $d=1$, then $I_D$ is non-empty
but its length converges to 0 as $L$ tends to infinity. By Theorem
\ref{thm:mainstripcor}, $\sigma_{\rm ac}(H) \supseteq \sigma(\Delta + D) =
[-2-2 d \cos(\pi/(L+1)), 2 + 2d \cos(\pi /(L+1))]$. By formally setting  $L$  to infinity the last interval becomes $[-2(d+1), 2 (d+1)]$.

\vspace{0.5cm}
\noindent
{\bf Remarks:} On the full two-dimensional lattice $\Z^2$, Bourgain~\cite{B02}
proved $\sigma_{\rm ac}(\Delta+q)\supseteq \sigma(\Delta)$ for Bernoulli and Gaussian distributed, independent random potentials whose variances decay faster than $|\underline{n}|^{-1/2}$. In~\cite{B03}, Bourgain improves this result to
the weaker $|\underline{n}|^{-1/3}$ decay rate.
For a deterministic potential, $q$, on $\Z^d$, Simon~\cite{Simon} conjectured
that if $(q_n/\sqrt{1+|\underline{n}|^{d-1}})_{n\in\Z^d}\in\ell^2(\Z^d)$, then
$\sigma_{\rm ac}(\Delta + q) = \sigma(\Delta)$. In dimension one this was
proved by Deift and Killip~\cite{DK99}. A recent improvement of this result
has been obtained by Denisov~\cite{D09}. In the analogous continuous setting,
progress has been made towards this $L^2$-conjecture e.g.\ by Denisov~\cite{D04}
and Laptev, Naboko, and Safronov~\cite{LNS}. For additional references see
\cite{CFKS}, \cite{Stoll}.

\section{Proofs of Theorem \ref{thm:mainstrip} and \ref{thm:mainstripcor}}

In order to prove the two main theorems in this paper we will study the Green's
functions defined by
\begin{equation}
G_n := P_n (H-\lambda)^{-1} P_n\,,\quad n\in\Z \,.
\end{equation}
Here, $P_n$ denotes the orthogonal projections of $\HH=\ell^2(\Z;\C^m)$ onto
the subspace $\ell^2(\{ n \}; \C^m)\cong \C^m$, and $\lambda$ denotes the
spectral parameter. Let $P_n^+:=\sum_{k\geq n} P_k$ and $P_n^-:=\sum_{k\leq n}
P_k$ be the orthogonal projections of $\HH$ onto the subspaces
$\ell^2(\{n,n+1,\ldots\};\C^m )$ and $\ell^2(\{\ldots,n-1,n \};\C^m )$,
respectively. Let
\begin{equation}
G_n^{\pm} := P_n \left( P_n^{\pm} (H_n - \lambda ) P_n^{\pm} \right)^{-1} P_n
\,,\quad n\in\Z
\end{equation}
be the so-called forward and backward Green's functions. Then we have the
recursion relation
\begin{equation} \label{eq:recurse0}
G_n =  -(G_{n+1}^+ + G_{n-1}^- +\lambda - D - q_n)^{-1} \,,\quad n\in\Z\,,
\end{equation}
which follows by using the decomposition $\HH = \ran P_n \oplus {\ran P_n}^
\perp$ and a resolvent identity.

If  ${\rm Im} \lambda > 0$, it is elementary to see that for each $n\in\Z$
\begin{equation}
G_n, G_n^{\pm} \in \SH_m := \{ Z=X + i Y : X, Y \in {\rm Sym}(m) , Y > 0 \} \,.
\end{equation}
Henceforth we will assume that ${\rm Im} \lambda > 0$. We equip the vector
space $\SH_m$ with the metric
$$
{\rm d}(Z,W) := \cosh^{-1} \Big( 1 + \frac{1}{2} {\rm cd}(Z,W) \Big)\,,
\quad Z , W \in \SH_m \,,
$$
where we have introduced
$$
{\rm cd}(Z,W) := {\rm tr}\left[({\rm Im} Z)^{-1} ( Z - W)^* ({\rm Im} W)^{-1}
(Z - W ) \right] \,.
$$
The space $(\SH_m,{\rm d})$ is called Siegel half space and is a generalization
of the usual Poincar\'e upper half plane.

By symmetry it will suffice to study $G_0^+$. Using the decomposition
$\ran P_0^+ = \ran P_0 \oplus \ran P_1^+$, we see that
\begin{equation} \label{eq:recursion}
G_0^+ = \Phi_{q_0}(G_1^+ ) \,,
\end{equation}
with the following mapping on $\SH_m$
$$
\Phi_{\delta}(Z)  :=  - ( Z + \lambda - D - \delta )^{-1}\,,\quad \delta \in
{\rm Sym}(m)\,,Z \in \SH_m\,.
$$
Iterating Eq. \eqref{eq:recursion} we arrive at
\begin{equation} \label{eq:iterate}
G_0^+ = \Phi_{q_0} \circ  \Phi_{q_1} \cdots  \circ\Phi_{q_n}(G_{n+1}^+) \,,
\quad n\in\N_0\,.
\end{equation}
We will use the following theorem, which is a special case of a theorem
obtained in \cite{FHS07}.
\begin{theorem}  \label{thm:fhs1}  Let ${\rm Im} \lambda > 0$ and
$(\Lambda_n )_{n \in \N_0} \subset \SH_m$ be any sequence.  Then
$$
G_0^+  = \lim_{n \to \infty} \Phi_{q_0} \circ \cdots \circ
\Phi_{q_n}(\Lambda_n) \,.
$$
\end{theorem}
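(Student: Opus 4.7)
\vspace{0.3cm}
\noindent
The plan is to show that each map $\Phi_\delta$ is a strict contraction of the Siegel metric $d$ on $\SH_m$ whenever ${\rm Im}\,\lambda > 0$, and then to combine this with \eqref{eq:iterate}, which already gives $G_0^+ = \Phi_{q_0} \circ \cdots \circ \Phi_{q_n}(G_{n+1}^+)$ for every $n$. The theorem will follow by comparing the composition applied to an arbitrary $\Lambda_n$ with the same composition applied to $G_{n+1}^+$, and showing that any initial Siegel distance gets killed as $n \to \infty$.

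For the contraction, I will decompose $\Phi_\delta = \iota \circ \tau_{i\eta I} \circ \tau_{\mu I - D - \delta}$, with $\iota(Z) = -Z^{-1}$, $\tau_M(Z) = Z + M$, $\mu = {\rm Re}\,\lambda$, and $\eta = {\rm Im}\,\lambda$. The inversion $\iota$ and translations $\tau_M$ by real-symmetric $M$ are isometries of $d$ (standard facts from the symplectic-group action on $\SH_m$), while $\tau_{i\eta I}$ strictly contracts. Quantitatively, the operator inequality $(A + \eta I)^{-1} \leq \frac{\|A\|}{\|A\| + \eta}\,A^{-1}$ for $A = {\rm Im}(Z) > 0$ plugs into the definition of $\rm cd$ to yield
$$
{\rm cd}(\Phi_\delta(Z), \Phi_\delta(W)) \leq \frac{R^2}{(R + \eta)^2}\, {\rm cd}(Z, W)
$$
whenever $\|{\rm Im}(Z)\|, \|{\rm Im}(W)\| \leq R$.

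The second ingredient is an absorbing region. The elementary bound $\|(Z + w)^{-1}\| \leq 1/\eta$ for $Z \in \SH_m$ and $w = \lambda - D - \delta$ gives $\Phi_\delta(\SH_m) \subseteq \Omega_\eta := \{Z \in \SH_m : \|Z\| \leq 1/\eta\}$, and a second application yields $\Phi_\delta(\Omega_\eta) \subseteq \Omega' := \{Z \in \SH_m : \|Z\| \leq 1/\eta,\ {\rm Im}(Z) \geq c_0 I\}$ for some $c_0 > 0$ uniform in $\delta \in K$, using ${\rm Im}(\Phi_\delta(Z)) \geq \eta\,(Z+w)^{-*}(Z+w)^{-1}$ together with the uniform bound on $\|Z + w\|$ when $Z \in \Omega_\eta$. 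The region $\Omega'$ is $\Phi_\delta$-invariant for every $\delta \in K$, has finite Siegel diameter, and by the previous display supports a uniform contraction factor $\rho = (1+\eta^2)^{-2} < 1$.

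Setting $Z_k := \Phi_{q_k} \circ \cdots \circ \Phi_{q_n}(\Lambda_n)$, after two steps we have $Z_{n-1} \in \Omega'$; and the recursion $G_k^+ = \Phi_{q_k}(G_{k+1}^+)$ places every $G_k^+$ inside $\Omega'$ as well. Iterating the uniform contraction on the invariant set $\Omega'$ therefore gives
$$
{\rm cd}(Z_0, G_0^+) \leq \rho^{\,n-1}\, {\rm cd}(Z_{n-1}, G_{n-1}^+) \leq \rho^{\,n-1} \sup_{Z, W \in \Omega'} {\rm cd}(Z, W) \longrightarrow 0,
$$
and the uniform lower bound ${\rm Im}(Z) \geq c_0 I$ on $\Omega'$ upgrades ${\rm cd}$-convergence to operator-norm convergence, proving $Z_0 \to G_0^+$. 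The main technical hurdle is establishing the quantitative contraction inequality above: it requires careful bookkeeping with noncommuting positive matrices in the trace defining ${\rm cd}$, and rests on the two nontrivial geometric facts that $\iota$ and the real-symmetric translations preserve $d$.
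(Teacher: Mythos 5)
Your proposal is correct and follows essentially the same route as the paper's Appendix~A proof: the decomposition of $\Phi_\delta$ into the hyperbolic isometries $Z\mapsto -Z^{-1}$ and real-symmetric translations plus the strictly contracting shift by $i\,{\rm Im}\,\lambda$ is exactly the paper's Step~1, and your two-step absorbing region $\Omega'$ is the paper's Lemma~\ref{thm:fhs13}; comparing the orbit of $\Lambda_n$ directly with $G_k^+=\Phi_{q_k}(G_{k+1}^+)$ is just a specialization of the paper's Cauchy/uniqueness argument combined with \eqref{eq:iterate}. (Minor quibble: the passage from ${\rm cd}\to 0$ to norm convergence uses the upper bound $\|{\rm Im}\,Z\|\le 1/\eta$ on $\Omega'$, not the lower bound ${\rm Im}\,Z\ge c_0$, but both hold there.)
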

We present a direct proof of this theorem in Appendix A, 
which in this case is simpler than the proof given in \cite{FHS07} for more
general graphs.

\vspace{0.5cm}

The next theorem measures the distance of $G_0^+$ from the free forward
Green's function, which is determined by the following fixed point relation
in $\SH_m$:
\begin{equation}
Z_\lambda = \Phi_0(Z_\lambda)\,.
\end{equation}
Solving for $Z_\lambda$ yields
\begin{equation}
Z_\lambda = \frac{D - \lambda}{2}  + i \sqrt{ 1 - \left(\frac{D - \lambda}{2}
\right)^2 } \,.
\end{equation}
Note that for real $\lambda$, we have ${\rm Im} Z_\lambda > 0$ if and only if
$\lambda$ is in the interior of $I_D$.
To formulate the next theorem we define 
\begin{equation}
{\rm cd}_\lambda(Z) := {\rm cd}(Z_\lambda , Z )\,,\quad Z\in \SH_m \,.
\end{equation}

\begin{theorem}  \label{thm:mainstrip0} Suppose $\E[\sum_{n \in \Z}
\|q_n\|^2 ] < \infty$. Let $J$ be a closed subset of the interior of $I_D$.
Then
\begin{equation}
\sup_{\lambda \in J + i(0,1]} \E\left[{\rm cd}_\lambda^2(G_n^{\pm})\right]
< \infty \,.
\end{equation}
\end{theorem}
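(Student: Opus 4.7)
The plan is to approximate $G_0^+$ by finite iterates $w_k^{(N)} := \Phi_{q_k} \circ \cdots \circ \Phi_{q_{k+N}}(Z_\lambda)$ and establish a bound on $\E[{\rm cd}_\lambda^2(w_k^{(N)})]$ uniform in $N$ and in $\lambda \in J + i(0,1]$. Theorem~\ref{thm:fhs1} with $\Lambda_n = Z_\lambda$ identifies the limit with $G_k^+$, so the claim for $G_n^+$ follows by Fatou's lemma, and the case $G_n^-$ is symmetric. The key observation is that $\Phi_0$ is an isometry of the Siegel metric (hence preserves ${\rm cd}$) and fixes $Z_\lambda$, while $\Phi_q(Z) = \Phi_0(Z-q)$. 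Therefore ${\rm cd}_\lambda(\Phi_q(Z)) = {\rm cd}_\lambda(Z-q)$, and a direct expansion in $q$ yields
\begin{equation*}
{\rm cd}_\lambda(Z-q) = {\rm cd}_\lambda(Z) + L(q,Z) + h(q,Z),
\end{equation*}
where $L(q,Z)$ is linear in $q$ and $h(q,Z) := \tr[(\Im Z)^{-1}\, q\, (\Im Z_\lambda)^{-1}\, q] \geq 0$. Since $w_{k+1}^{(N)}$ depends only on $\{q_j : j > k\}$, it is independent of $q_k$, and $\E[q_k]=0$ makes the linear term vanish in conditional expectation with respect to $\mathcal{F}_{k+1} := \sigma(q_j : j > k)$.

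The central technical ingredient is that there exists $C = C(J) < \infty$ such that, for all $\lambda \in J + i(0,1]$ and all $Z \in \SH_m$,
\begin{equation*}
\|(\Im Z)^{-1}\| \leq C\bigl(1 + {\rm cd}_\lambda(Z)\bigr).
\end{equation*}
To see this, one first uses the explicit formula for $Z_\lambda$, continuity of $\lambda \mapsto Z_\lambda$ up to $\Im\lambda = 0$, and compactness of $J$ in the interior of $I_D$ to get $\Im Z_\lambda \geq c(J)\, I > 0$ uniformly. Conjugating by $R := (\Im Z_\lambda)^{1/2}$ (the map $Z \mapsto R^{-1}ZR^{-1}$ is an isometry of $\SH_m$ that preserves ${\rm cd}$) reduces matters to the normalized case $\Im Z_\lambda = I$. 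There one writes ${\rm cd}_\lambda(Z) \geq \tr[(\Im Z)^{-1}(\Im Z - I)^2] \geq (y_{\min}-1)^2/y_{\min}$, where $y_{\min}$ is the smallest eigenvalue of $\Im Z$, forcing $y_{\min} \geq c/(1+{\rm cd}_\lambda(Z))$.

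This estimate gives $\E[h(q_k,w_{k+1}^{(N)}) \mid \mathcal{F}_{k+1}] \leq C(1+g_{k+1})\E[\|q_k\|^2]$, with $g_k := {\rm cd}_\lambda(w_k^{(N)})$, while the Cauchy--Schwarz inequality applied to the bilinear form defining ${\rm cd}$ yields $|L(q_k,w_{k+1}^{(N)})| \leq 2\sqrt{g_{k+1}\cdot h(q_k,w_{k+1}^{(N)})}$. Combining these with the boundedness of the support of $q_k$ (so that $\E[\|q_k\|^p] \leq C\E[\|q_k\|^2]$ for all $p\geq 2$), one obtains
\begin{equation*}
\E[g_k^2 \mid \mathcal{F}_{k+1}] \leq g_{k+1}^2\bigl(1 + C\E[\|q_k\|^2]\bigr) + C(1+g_{k+1})\E[\|q_k\|^2],
\end{equation*}
and the companion first-moment bound $\E[g_k \mid \mathcal{F}_{k+1}] \leq g_{k+1}(1 + C\E[\|q_k\|^2]) + C\E[\|q_k\|^2]$. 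Taking total expectations and iterating these coupled recursions from $k = n+N+1$ down to $k=n$ with initial value $g_{n+N+1}^{(N)} = 0$ produces
\begin{equation*}
\E[{\rm cd}_\lambda^2(w_n^{(N)})] \leq C\exp\Bigl(C \sum_{j \in \Z}\E[\|q_j\|^2]\Bigr) \sum_{j \in \Z}\E[\|q_j\|^2],
\end{equation*}
which is finite by hypothesis and uniform in $N$ and $\lambda$; Fatou's lemma then finishes the proof.

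The main obstacle is securing the estimate $\|(\Im Z)^{-1}\| \leq C(1 + {\rm cd}_\lambda(Z))$ with a constant $C$ that does not deteriorate as $\Im\lambda \downarrow 0$. Without this refinement, the naive bound $h \leq m\,\|(\Im Z)^{-1}\|\,\|(\Im Z_\lambda)^{-1}\|\,\|q\|^2$ provides no useful control because $(\Im w_{k+1}^{(N)})^{-1}$ has no a priori bound on $J + i(0,1]$. Routing the dependence on $(\Im w_{k+1}^{(N)})^{-1}$ through ${\rm cd}_\lambda$ itself, and bounding the linear-in-$q$ error $L$ by Cauchy--Schwarz (rather than term-by-term), is what closes the Gronwall recursion with constants uniform in $\lambda$.
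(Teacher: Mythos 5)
Your proof follows essentially the same strategy as the paper's: expand ${\rm cd}_\lambda(Z-q)$, exploit the zero mean of $q$ to kill the linear term, control the quadratic term via the uniform estimate $\tr[(\Im Z)^{-1}]\leq C(1+{\rm cd}_\lambda(Z))$ on $J+i(0,1]$, and iterate to a Gronwall bound; the paper packages the iteration a bit more tidily by bounding the ratio $\bigl({\rm cd}_\lambda^2(\Phi_\delta(Z))+1\bigr)/\bigl({\rm cd}_\lambda^2(Z)+1\bigr)\leq 1+A(Z,\delta)+C_0\|\delta\|^2$ in a single lemma, uses dominated convergence rather than Fatou (justified by the compact-ball Lemma~\ref{thm:fhs13}), but the ingredients are the same. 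One misstatement to flag: $\Phi_0$ is \emph{not} an isometry of $\SH_m$ when $\Im\lambda>0$ --- the translation $Z\mapsto Z+\lambda-D$ strictly increases $\Im Z$ and hence is a strict hyperbolic contraction --- so ${\rm cd}_\lambda(\Phi_q(Z))={\rm cd}_\lambda(Z-q)$ should read ${\rm cd}_\lambda(\Phi_q(Z))\leq{\rm cd}_\lambda(Z-q)$; fortunately the inequality goes in the direction you need, so the argument is unaffected.
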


\begin{proof} By symmetry it suffices to consider  without loss of generality
$G_0^+$. We assume $\lambda \in J + i (0, 1]$ is fixed. By Theorem
\ref{thm:fhs13}, we know that $G_0^+ = \lim_{n \to \infty} Z_{0,n}$, where
$Z_{0,n} = \Phi_{q_0} \circ \Phi_{q_1} \circ \cdots \circ \Phi_{q_n}
(Z_\lambda)$. Moreover, by Lemma \ref{thm:fhs13} from Appendix B, we know that there exists a
hyperbolic ball $B \subset \SH_m$
such that $Z_{0,n} \in B$ for  all $n\geq 2$ and potentials $q$ with
$q_k \in K^{}$. By continuity of the function $Z \mapsto {\rm cd}^2_\lambda(Z)$,
we have
$$
\lim_{n \to \infty} {\rm cd}_\lambda^2(Z_{0,n}) = {\rm cd}^2_\lambda(G_0^+)\,.
$$
Since ${\rm cd}_\lambda^2(Z)$ is bounded on the ball $B$, it follows from
dominated convergence that
$$
\E[{\rm cd}_\lambda^2(G_0^+) ] = \lim_{n \to \infty} \E[{\rm cd}_\lambda^2
(Z_{0,n})] \,.
$$
It remains to show that the right-hand side is bounded uniformly in
$\lambda \in J + i (0, 1]$.
To this end we set $Z_{\ell,n} := \Phi_{q_\ell} \circ \Phi_{q_{\ell+1}} \circ
\cdots \circ \Phi_{q_n}(Z_\lambda)$. Note that $Z_{\ell,n} = \Phi_{q_\ell}
(Z_{\ell+1,n})$. Using the  inequality of Lemma \ref{lem:2} below, we find
\begin{eqnarray*}\lefteqn{
\E[{\rm cd}_\lambda^2(Z_{0,n})] + 1}
\\
&=& \int_{K^{n+1}} ({\rm cd}_\lambda^2(Z_{0,n}) + 1 ) \, d \nu_0(q_0)
\cdots d \nu_n(q_n)
\\
&=& \int_{K^{n+1}} \frac{{\rm cd}_\lambda^2[\Phi_{q_0} ( Z_{1,n})] + 1 }{{\rm
cd}_\lambda^2(Z_{1,n}) + 1} \,({\rm cd}_\lambda^2(Z_{1,n}) + 1 ) \,
d \nu_0(q_0) \cdots d \nu_n(q_n)
\\
&\leq& \int_K ( 1 + A(Z_{1,n},q_0)  + C_0 \|q_0\|^2 )\, d\nu_0(q_0)
\int_{K^n} ({{\rm cd}_\lambda^2(Z_{1,n}) + 1}) \,
d \nu_1(q_1) \cdots d \nu_n(q_n)
\\
&=& ( 1 + C_0 \,\E[\|q_0\|^2] ) \int_{K^n} ({\rm cd}_\lambda^2(Z_{1,n}) + 1) \,
d \nu_1(q_2) \cdots d \nu_n(q_n)
\\
&\vdots& \\
&\leq& \prod_{i=0}^n ( 1 + C_0 \,\E[\|q_i\|^2] )
\\
&\leq&  \exp  (C_0 \sum_{i=0}^\infty \E[\|q_i\|^2] ) \,,
\end{eqnarray*}
where we have used $\int A(z, q) \,d\nu_i(q) = 0$, which follows from the
assumption that $q_i$ is a random variable with mean zero.
\end{proof}

\begin{lemma}
\label{lem:2} Suppose $K$ is a compact subset of $\SH_m$. Let $J$ be a closed interval contained in the
interior of $I_D$. Then there exists a constant $C_0$ and a linear functional
$A(Z, \cdot ) : {\rm Sym}(m) \to \R$, depending continuously on $Z \in \SH_m$,
such that for all  $\lambda \in J + i (0,1]$ and $Z \in \SH_m$,
\begin{align} \label{eq:importantestimate}
\frac{{\rm cd}_\lambda^2(\Phi_\delta(Z)) + 1 }{{\rm cd}_\lambda^2(Z) + 1 }
\leq  1 + A(Z,\delta) + C_0 \| \delta \|^2\,,   \quad  \forall \delta \in K^{} \,.
\end{align}
\end{lemma}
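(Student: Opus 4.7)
The plan is to Taylor expand the map $\delta \mapsto \mathrm{cd}_\lambda^2(\Phi_\delta(Z))$ about $\delta = 0$ and match the three resulting pieces to the right-hand side of the asserted inequality. The zeroth-order term $\mathrm{cd}_\lambda^2(\Phi_0(Z))$ is to be controlled by $\mathrm{cd}_\lambda^2(Z)$ via a non-expansion property of $\Phi_0$; the first-order (linear in $\delta$) piece, divided by $\mathrm{cd}_\lambda^2(Z) + 1$, supplies the linear functional $A(Z, \delta)$; and the second-order remainder, similarly divided, supplies the $C_0 \|\delta\|^2$ term.

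The key structural step is the non-expansion $\mathrm{cd}_\lambda^2(\Phi_0(Z)) \leq \mathrm{cd}_\lambda^2(Z)$. I would prove it by factoring $\Phi_0 = \Psi \circ \tau_\mu$, where $\mu := \mathrm{Im}\,\lambda \in (0, 1]$, $\tau_\mu(W) := W + i\mu I$ is a vertical shift, and $\Psi(W) := -(W + \mathrm{Re}\,\lambda - D)^{-1}$ is the M\"obius action of a real symplectic matrix in $\mathrm{Sp}(2m, \R)$, hence a Siegel isometry. The shift $\tau_\mu$ is a $\mathrm{cd}$-non-expansion because $(\mathrm{Im}\,W + \mu I)^{-1} \leq (\mathrm{Im}\,W)^{-1}$ in the operator order, so two applications of the trace monotonicity $\mathrm{tr}[PX] \leq \mathrm{tr}[QX]$ (valid for $0 \leq P \leq Q$ and $X \geq 0$) give $\mathrm{cd}(\tau_\mu Z, \tau_\mu W) \leq \mathrm{cd}(Z, W)$. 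Applying the composite to the pair $(Z_\lambda, Z)$ and using $\Phi_0(Z_\lambda) = Z_\lambda$ yields the claim. The same argument shows that every $\Phi_\delta$ with $\delta \in K$ is itself a Siegel non-expansion (since $\mathrm{Re}\,\lambda - D - \delta$ is still real symmetric), a fact used below.

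From the identity $\Phi_\delta(Z) = (\Phi_0(Z)^{-1} + \delta)^{-1}$ the map $\delta \mapsto \Phi_\delta(Z)$ is real-analytic near $\delta = 0$, so
\[
\mathrm{cd}_\lambda^2(\Phi_\delta(Z)) = \mathrm{cd}_\lambda^2(\Phi_0(Z)) + L_Z(\delta) + R(Z, \delta),
\]
with $L_Z : \mathrm{Sym}(m) \to \R$ linear (and continuous in $Z$) and $R(Z, \delta)$ the second-order remainder. Setting $A(Z, \delta) := L_Z(\delta)/(\mathrm{cd}_\lambda^2(Z) + 1)$ provides the advertised linear functional, and combining with the non-expansion reduces the proof to the uniform bound
\[
R(Z, \delta) \leq C_0 \|\delta\|^2 \bigl(\mathrm{cd}_\lambda^2(Z) + 1\bigr), \qquad \delta \in K,\ Z \in \SH_m,\ \lambda \in J + i(0, 1].
\]

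This remainder estimate is the main obstacle. Using Taylor's theorem with integral remainder, $R(Z,\delta) = \int_0^1 (1-s)\,\partial_\delta^2 \mathrm{cd}_\lambda^2(\Phi_{s\delta}(Z))(\delta,\delta)\,ds$, together with the derivative formulas $\partial_\delta \Phi_\delta(Z)\cdot\eta = -\Phi_\delta(Z)\,\eta\,\Phi_\delta(Z)$ and $\mathrm{Im}\,\Phi_\delta(Z) = \Phi_\delta(Z)(\mathrm{Im}\,Z + \mu I)\overline{\Phi_\delta(Z)}$, one expands the second derivative into a finite sum of traces involving $\mathrm{Im}(Z_\lambda)^{-1}$, $\mathrm{Im}\,\Phi_{s\delta}(Z)^{-1}$, and polynomial expressions in $Z_\lambda - \Phi_{s\delta}(Z)$ and $\delta$. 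Each term can be bounded by the Cauchy--Schwarz inequality in the Hilbert--Schmidt trace inner product by a constant times $\|\delta\|^2\,(\mathrm{cd}_\lambda^2(\Phi_{s\delta}(Z)) + 1)$, with the constant uniform because $Z_\lambda$ and $\mathrm{Im}(Z_\lambda)^{-1}$ remain bounded on the closure $J + i[0,1]$ (the imaginary part of $Z_\lambda$ is bounded away from zero since $J \subset \mathrm{int}(I_D)$ is compact) and $s\delta$ ranges over the compact set $K$. To absorb the factor $\mathrm{cd}_\lambda^2(\Phi_{s\delta}(Z)) + 1$, the non-expansion of $\Phi_{s\delta}$ combined with the triangle inequality for the Siegel metric gives $d(Z_\lambda,\Phi_{s\delta}(Z)) \leq d(Z_\lambda, Z) + d(Z_\lambda,\Phi_{s\delta}(Z_\lambda))$, and $d(Z_\lambda,\Phi_{s\delta}(Z_\lambda))$ is uniformly bounded for $s\delta \in K$ and $\lambda \in J + i[0,1]$ by continuity (using $\Phi_0(Z_\lambda) = Z_\lambda$). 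Passing back to $\mathrm{cd}$ via $\cosh d = 1 + \mathrm{cd}/2$ yields $\mathrm{cd}_\lambda^2(\Phi_{s\delta}(Z)) + 1 \leq \mathrm{const}\cdot(\mathrm{cd}_\lambda^2(Z) + 1)$, closing the estimate.
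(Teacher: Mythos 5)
Your overall architecture (zeroth order controlled by a contraction, first order supplying $A$, second order supplying $C_0\|\delta\|^2$) is the right shape, and the reduction to the remainder bound $R(Z,\delta)\leq C_0\|\delta\|^2(\mathrm{cd}_\lambda^2(Z)+1)$ is correct, as is your proof that $\Phi_0$ is a $\mathrm{cd}$-non-expansion. But the remainder estimate is the entire content of the lemma, and at that point you assert rather than prove it. Expanding $\partial_\delta^2\,\mathrm{cd}_\lambda^2(\Phi_{s\delta}(Z))$ produces traces containing $\Phi_{s\delta}(Z)$, $\bigl(\mathrm{Im}\,\Phi_{s\delta}(Z)\bigr)^{-1}$ and $\bigl(\mathrm{Im}\,\Phi_{s\delta}(Z)\bigr)^{-1/2}\,\partial_s\Phi_{s\delta}(Z)$ with $\partial_s\Phi_{s\delta}(Z)=-\Phi_{s\delta}(Z)\,\delta\,\Phi_{s\delta}(Z)$; none of these operators is uniformly bounded over $Z\in\SH_m$ and $\lambda\in J+i(0,1]$ (take $\mathrm{Im}\,Z$ and $\mathrm{Im}\,\lambda$ small with $\mathrm{Re}\,\lambda-D$ nearly singular, so that $\|\Phi_{s\delta}(Z)\|$ blows up). Your claim that ``each term can be bounded by Cauchy--Schwarz by a constant times $\|\delta\|^2(\mathrm{cd}_\lambda^2(\Phi_{s\delta}(Z))+1)$ with the constant uniform'' is therefore exactly the hard point, and nothing in the proposal establishes it. I would count this as a genuine gap, not a routine omission.

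The paper avoids this difficulty with one observation you did not use: $\Phi_\delta(Z)=\Phi_0(Z-\delta)$, so the non-expansion property applied to the pair $(Z-\delta,\,Z_\lambda)$ gives $\mathrm{cd}_\lambda(\Phi_\delta(Z))\leq\mathrm{cd}_\lambda(Z-\delta)$ for \emph{every} $\delta$, not just $\delta=0$. Since subtracting the real symmetric matrix $\delta$ leaves $\mathrm{Im}\,Z$ unchanged, $\delta\mapsto\mathrm{cd}_\lambda(Z-\delta)$ is \emph{exactly} a quadratic polynomial, $\mathrm{cd}_\lambda(Z)+a(Z,\delta)+b(Z,\delta)$, with no remainder to estimate; its coefficients involve only $Y_\lambda^{-1}$, $Y^{-1}=(\mathrm{Im}\,Z)^{-1}$ and $Z-Z_\lambda$, and are controlled by Cauchy--Schwarz ($a^2\leq 4\,\mathrm{cd}_\lambda(Z)\,b$) together with the trace inequality $\mathrm{tr}(Y_\lambda^{1/2}Y^{-1}Y_\lambda^{1/2})\leq\mathrm{cd}_\lambda(Z)+2m$, which is what converts the dangerous factor $Y^{-1}$ into $\mathrm{cd}_\lambda(Z)$ and makes the final constant uniform. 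If you want to salvage your route, you would essentially have to reprove these same trace inequalities for the much more complicated quantities attached to $\Phi_{s\delta}(Z)$; the cleaner fix is to apply the contraction before, not after, expanding in $\delta$.
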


\begin{proof} Using that $\Phi_0$ is a hyperbolic contraction, we see ${\rm
cd}_\lambda(\Phi_\delta(Z))={\rm cd}(\Phi_\delta(Z),Z_\lambda) \leq
{\rm cd}(Z - \delta, Z_\lambda) = {\rm cd}_\lambda(Z-\delta)$.
By the definition of the distance function we have
\begin{align*}
{\rm cd}(Z - \delta, Z_\lambda) = {\rm cd}(Z , Z_\lambda) + a(Z,\delta) +
b(Z,\delta) \,,
\end{align*}
where (with $Z_{\lambda} = X_{\lambda} + i Y_{\lambda}$ and $Z = X + i Y$),
\begin{align*}
a(Z,\delta) &:= - {\rm tr}\big[Y_\lambda^{-1/2}\delta Y^{-1}(Z  - Z_\lambda)
Y_\lambda^{-1/2} \big] - {\rm tr}\big[ Y_\lambda^{-1/2}(Z  - Z_\lambda)^*Y^{-1}
\delta Y_\lambda^{-1/2}\big]\,,
\\
b(Z,\delta) &:= {\rm tr}( Y_\lambda^{-1/2} \delta  Y^{-1} \delta
Y_\lambda^{-1/2} ) \,.
\end{align*}
Using the Cauchy-Schwarz  inequality it follows that
$$
{\rm L. \ H. \ S. \ of } \ \eqref{eq:importantestimate} \leq
1 + A(Z,\delta) + C(Z,\delta) \,,
$$
with
\begin{align*}
 A(Z,\delta) &:= \frac{2\, {\rm cd}_\lambda(Z) \,a(Z,\delta)}{{\rm cd}_\lambda^2(Z) + 1} \ ,
 \\
 C(Z,\delta) &:= \frac{2\, a(Z,\delta)^2 + 2\, {\rm cd}_\lambda(Z) \,b(Z,\delta) + 2
\, b(Z,\delta)^2 }{{\rm cd}_\lambda^2(Z) + 1 } \,.
\end{align*}
It remains to show that $C(Z,\delta) \leq C_0 \| \delta \|^2$ for some $C_0$.
Let us use the bounds,
\begin{align}
\label{a-inequal}
a(Z,\delta)^2 &\leq 4 \,{\rm cd}_\lambda(Z) \,b(Z,\delta) \,,
\\
\label{b-inequal} b(Z,\delta) &\leq \| Y_\lambda^{-1} \|^2\, \| \delta \|^2 \,
{\rm tr} ( Y_\lambda^{1/2}  Y^{-1}  Y_\lambda^{1/2} ) \,.
\end{align}
(\ref{a-inequal}) follows from the Cauchy-Schwarz inequality. The
trace in the function $b$ can be written as ${\rm tr}(EFE)$ with $E:=Y_\lambda^{-1/2}\delta Y_\lambda^{-1/2}$ and $F:=Y_\lambda^{1/2}Y^{-1}
Y_\lambda^{1/2}$. This trace is estimated from above by $\|E^2\| \,{\rm tr} F$. Then
use \linebreak  $\|E^2\|\leq\|Y_\lambda^{-1/2}\|^2 \,\|Y_\lambda^{-1}\|\, \|\delta\|^2$.
Since $Y_\lambda$ is self-adjoint $\|Y_\lambda^{-1/2}\|^2 =\|Y_\lambda^{-1}\|$, and
(\ref{b-inequal}) follows.

The next estimate allows us to bound the right-hand side of  \eqref{b-inequal} in terms of  ${\rm
cd}_\lambda(Z)$.
\begin{align}\lefteqn{
{\rm tr} (Y_\lambda^{1/2} Y^{-1} Y_\lambda^{1/2})}\nonumber
\\
& \leq {\rm tr} (Y_\lambda^{1/2} Y^{-1} Y_\lambda^{1/2})  + {\rm tr} (  Y^{1/2}    Y_\lambda^{-1} Y^{1/2}       )  \nonumber   \\
& = {\rm tr}\big[Y_\lambda^{-1/2} ( Y - Y_\lambda)   Y^{-1} ( Y - Y_\lambda)
Y_\lambda^{-1/2} \big] + 2 m  \nonumber  \\
&\leq {\rm tr}\big[Y_\lambda^{-1/2} ( Y - Y_\lambda)   Y^{-1} ( Y - Y_\lambda)
 Y_\lambda^{-1/2} \big]
+ {\rm tr}\big[Y_\lambda^{-1/2}  (X - X_\lambda)  Y^{-1} ( X - X_\lambda)
Y_\lambda^{-1/2}\big] + 2 m  \nonumber \\
& = {\rm cd}_\lambda(Z) + 2 m  \,. \label{eq:traceineq}
\end{align}
The claim now follows by inserting the above estimates and using that $\| Y_\lambda \|$ and $\| Y_\lambda^{-1} \|$ are uniformly bounded for
$\lambda \in J + i(0,1]$ and that $\delta$ is contained in a bounded set.
\end{proof}

\vspace{0.5cm}
\noindent
{\it Proof of Theorem \ref{thm:mainstrip}.}

\noindent
\underline{Step 1:} Almost surely $\sigma(H) \supseteq \sigma(\Delta + D)$.

\vspace{0.5cm}

The condition $\mathbb{E}[\sum_{n \in \Z} \| q_n \|^2 ]$ implies that almost
all potentials are in $\ell^2$ and thus decay at infinity. $H$ is thus a compact
perturbation of $\Delta + D$ and hence $\sigma(\Delta + D)=\sigma_{{\rm ess}}
(\Delta + D) = \sigma_{{\rm ess}}(H)\subseteq\sigma(H)$ by Weyl's Theorem.


\vspace{0.5cm}
\noindent
\underline{Step 2:}   Let $J$ be any closed interval contained in the interior
of $I_D$. Let $W_\lambda := - \left(2 Z_\lambda + \lambda - D\right)^{-1}$.
Then
$$\sup_{\lambda \in J + i (0,1]} \E \left[{\rm cd}^2(G_n,W_\lambda)
\right] < \infty\,.
$$

\vspace{0.5cm}
If we use the recursion relation \eqref{eq:recurse0}, the fact that
$Z \mapsto - Z^{-1}$ is a hyperbolic isometry, and the inequalities of Lemma
\ref{lem:ineqappendix} (given in the Appendix B) we find that
\begin{align*}
{\rm cd}(G_n,W_\lambda) &\leq {\rm cd}(G_{n+1}^+ + G_{n-1}^{-} + q_n , 2
Z_\lambda )
\\
&\leq {\rm cd}(G_{n+1}^+ + q_n/2, Z_\lambda) +  {\rm cd}(G_{n-1}^- + q_n/2,
Z_\lambda)
\\
&\leq C \left[  1 + {\rm cd}(G_{n+1}^+, Z_\lambda) +
{\rm cd}(G_{n-1}^-, Z_\lambda)  \right]  (1 + \| q_n \|^2 )\,.
\end{align*}
Then,
\begin{align*}
\E \left[{\rm cd}^2(G_n,W_\lambda)\right]
&\leq C \big(1+ {\rm cd}_\lambda(G_{n+1}^+) + {\rm cd}_\lambda(G_{n+1}^-)\big)
\,,
\end{align*}
and Step 2 follows from Theorem \ref{thm:mainstrip0}.

\vspace{0.5cm}
\noindent
\underline{Step 3:} Almost surely $H$ has purely ac spectrum in the interior of  $I_D$.

\vspace{0.5cm}

For $x \in \Z \times \{ 1,...,m\}$ let $\mu_x$ denote the spectral measure of
$H$ for the indicator function at $x$, $1_x \in \HH \cong \ell^2(\Z \times
\{ 1,...,m\})$.
Step 2 implies that almost surely $\mu_x$ is absolutely continuous on any closed
subset of the interior of $I_D$. This  can be seen for example  by applying
Lemma 1 in \cite{FHS09} and noting that for any closed subset $J$ contained
in the interior of $I_D$ there exists a constant $C$ such that (see Lemma
\ref{lem:ineqappendix}) ${\rm tr} \, ({\rm Im} Z) \leq C \,( {\rm cd}_\lambda(Z)
+1)$ for all $\lambda \in J$ and $Z\in\SH_m$; see also \cite[Theorem 4.1]{K98}.
Now choosing
a sequence of closed subsets $(J_n)_{n \in \N}$ of the interior of $I_D$, such
that $J_n \subset J_{n+1}$ and $\bigcup_{n=1}^\infty J_n = I_D$, and using
that countable unions of sets of measure zero have again measure zero, we find
almost surely that for all $x \in \Z \times \{ 1,...,m \}$ the spectral measure
$\mu_x$ is absolutely continuous on the interior of  $I_D$.

\vspace{0.5cm}

\noindent
The theorem now  follows by combining Steps 1 and 3. \qed

\vspace{0.5cm}

To prove Theorem \ref{thm:mainstripcor} we  will use Theorem \ref{thm:mainstrip}
in combination with Theorem \ref{thm:denisov} below. Theorem \ref{thm:denisov}
is an extension
of a theorem by Denisov~\cite[Theorem 1.2]{D06}. A proof can also be found in
Albeverio and Konstantinov
\cite{AK08}. We give a proof in Appendix C following arguments given in
\cite{AMM03,D06}.

\begin{theorem}[Denisov] \label{thm:denisov}
Let $H_1$ and $H_2$ be two bounded self-adjoint operators on the Hilbert spaces
$\HH_1$ and $\HH_2$, respectively. Assume that for $a<b$, $[a,b] \subseteq
\sigma_{\rm ac}(H_1)$ and $\sigma_{\rm ess}(H_2) \subseteq (-\infty , a ]
\cup [b , \infty)$.
Let $V : \HH_2 \to \HH_1$ be a Hilbert-Schmidt operator (i.e., $V^*V$ and
$V V^*$ are trace class operators on $\HH_2$ respectively $\HH_1$) and let
$H_V := \left[ \begin{array}{cc}H_1 &  V  \\ V^*  &  H_2
\end{array} \right]$. Then, $[a,b] \subseteq \sigma_{\rm ac}(H_V)$.
\end{theorem}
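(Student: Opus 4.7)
The plan is to exploit the Schur complement formula, which expresses the compressed resolvent of $H_V$ on $\HH_1 \oplus \{0\}$ as a scalar Herglotz function of a single variable for each test vector, and then to invoke a Denisov-type estimate (as in \cite{AMM03,D06}) to transfer the ac spectrum of $H_1$ on $[a,b]$ to $H_V$.

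For $z \in \C^+$ the Schur complement yields
\begin{equation*}
P_1 (H_V - z)^{-1} P_1 = M(z)^{-1}, \qquad M(z) := H_1 - z - T(z), \qquad T(z) := V(H_2 - z)^{-1} V^*,
\end{equation*}
where $P_1$ denotes the orthogonal projection from $\HH_1 \oplus \HH_2$ onto $\HH_1 \oplus \{0\}$. A direct calculation gives $\Im T(z) = \Im z \cdot V(H_2-z)^{-1}(H_2-\bar z)^{-1} V^* \geq 0$, so $\Im M(z) \leq -\Im z \cdot I$ and, for every $\phi \in \HH_1$, the scalar function $F_\phi(z) := \langle \phi, M(z)^{-1} \phi\rangle$ is Herglotz with $\Im F_\phi(\lambda + i\epsilon) \geq \epsilon\,\|M(\lambda+i\epsilon)^{-1}\phi\|^2 > 0$. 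By construction $F_\phi$ is the Borel transform of the $H_V$-spectral measure $\mu_\phi^V$ of $(\phi, 0) \in \HH_1 \oplus \HH_2$, so showing that $d\mu_\phi^{V,{\rm ac}}/d\lambda > 0$ Lebesgue-a.e.\ on $[a,b]$ for a suitable $\phi$ suffices to conclude $[a,b] \subseteq \sigma_{\rm ac}(H_V)$.

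Next, because $V$ is Hilbert-Schmidt and $\sigma_{\rm ess}(H_2) \cap (a,b) = \emptyset$, the spectrum of $H_2$ in $(a,b)$ consists of at most countably many isolated eigenvalues of finite multiplicity, and $T$ extends meromorphically across $(a,b)$ with finite-rank residues there. Away from these eigenvalues, $T(\lambda + i0) = V(H_2 - \lambda)^{-1} V^*$ is a self-adjoint \emph{trace-class} operator depending continuously on $\lambda$. Using this, I would establish a Denisov-type entropy lower bound
\begin{equation*}
\liminf_{\epsilon \to 0^+} \int_J \log \Im F_\phi(\lambda + i\epsilon)\, d\lambda > -\infty
\end{equation*}
for every open subinterval $J \subseteq [a,b]$ and for $\phi$ chosen from a suitable dense subset of the ac subspace of $H_1$, by comparing $F_\phi$ to the Borel transform of the $H_1$-spectral measure at $\phi$ (whose entropy on $J$ is finite since $[a,b] \subseteq \sigma_{\rm ac}(H_1)$) and controlling the correction via the trace-class character of $T(\lambda+i0)$. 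Fatou's lemma then forces $\Im F_\phi(\lambda + i0) > 0$ on a Lebesgue-positive subset of $J$, hence $[a,b] \subseteq \sigma_{\rm ac}(H_V)$.

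The main obstacle is exactly this entropy estimate. The trivial pointwise lower bound $\Im F_\phi(\lambda+i\epsilon) \geq \epsilon\,\|M(\lambda+i\epsilon)^{-1}\phi\|^2$ vanishes with $\epsilon$ and cannot be propagated to the boundary on its own; a quantitative comparison with the unperturbed Borel transform of $H_1$ is needed, and this is where the Hilbert-Schmidt hypothesis on $V$ enters in a quantitative way, since it is what makes $T(\lambda + i0)$ trace-class and keeps the correction term controlled. The poles of $T$ at the eigenvalues of $H_2$ in $(a,b)$ contribute at most a Lebesgue-null exceptional set and do not affect the conclusion.
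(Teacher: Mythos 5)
Your setup via the Schur complement is correct as far as it goes: $P_1(H_V-z)^{-1}P_1=\bigl(H_1-z-V(H_2-z)^{-1}V^*\bigr)^{-1}$, the function $F_\phi$ is Herglotz, and it is the Borel transform of the spectral measure of $(\phi,0)$. But the proof has a genuine gap exactly where you yourself locate "the main obstacle": the entropy lower bound $\liminf_{\epsilon\to 0^+}\int_J\log \Im F_\phi(\lambda+i\epsilon)\,d\lambda>-\infty$ is asserted, not proved, and it is the entire analytic content of the theorem. The difficulty is that $T(z)=V(H_2-z)^{-1}V^*$ is an operator-valued, $z$-dependent, generally non-self-adjoint perturbation of $H_1-z$ that does not commute with $H_1$, so there is no off-the-shelf sum-rule or entropy inequality comparing $\langle\phi,(H_1-z-T(z))^{-1}\phi\rangle$ with $\langle\phi,(H_1-z)^{-1}\phi\rangle$; the Killip--Simon/Deift--Killip entropy machinery applies to very specific structured perturbations, and "controlling the correction via the trace-class character of $T(\lambda+i0)$" is not a step one can carry out by a routine estimate. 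A secondary issue: since only $\sigma_{\rm ess}(H_2)$ is assumed to avoid $(a,b)$, the eigenvalues of $H_2$ in $(a,b)$ may accumulate at $a$ and $b$, so $\|T(\lambda+i0)\|$ need not stay bounded near any fixed compact subinterval until you first remove finitely many of them by a finite-rank modification; this is fixable but must be said.

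For comparison, the paper avoids the entropy route entirely and runs everything through trace-class scattering theory. It first truncates, setting $\widehat H_1:=H_1\chi_{[a+\epsilon,b-\epsilon]}(H_1)$ and modifying $H_2$ by a finite-rank operator so that $\sigma(H_2)$ misses $(a+\epsilon/2,b-\epsilon/2)$; then the Albeverio--Makarov--Motovilov theorem (proved via Riesz projections and graph subspaces, with the off-diagonal coupling removed up to a trace-class error) gives $[a+\epsilon,b-\epsilon]\subseteq\sigma_{\rm ac}(\widehat H_V)$. The key remaining step is to show, using the Helffer--Sj\"ostrand formula and a two-fold resolvent expansion, that $f(H_V)-f(\widehat H_V)$ is trace class for $f\in C_0^\infty([a+\epsilon,b-\epsilon])$, after which Kato--Rosenblum transfers the ac spectrum. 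If you want to salvage your approach, the honest path is essentially to reprove this trace-class comparison; as written, your argument does not yet establish the theorem.
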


\vspace{0.5cm}
\noindent
{\it Proof of Theorem \ref{thm:mainstripcor}.}
Let $\{\mu_1,\ldots,\mu_m\}$ be the eigenvalues of $D$ and let $\lambda\in \mathbb R$.
Then the eigenvalues of $Z_\lambda$ are given by
$z_{\lambda, k}=(\mu_k-\lambda)/2 + i\sqrt{1-((\mu_k-\lambda)/2)^2}$. If
$\lambda\in [\mu_k-2,\mu_k+2]$ then $z_{\lambda, k}$ lies on the unit semicircle above the real axis. Otherwise $z_{\lambda, k}$ lies on the real axis outside the unit circle (see diagram)

\begin{center}\includegraphics{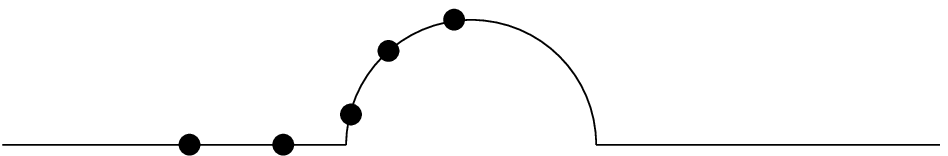}\end{center}

Since $Z_\lambda$ is related to the Green's function for $\Delta + D$, a point $\lambda$ lies in $\sigma(\Delta + D)$ if and only if at least one of the $z_{\lambda, k}$ lies on the semicircle, and thus has positive imaginary part. Let $m(\lambda)$ denote the number of $z_{\lambda, k}$ on the semicircle. As we vary $\lambda$, the function $m(\lambda)$ is locally constant, with jumps when one of the $z_{\lambda, k}$ moves in or out of the semicircle.
Pick a  $\lambda_0$ and let $I$ be the largest interval containing $\lambda_0$ on which $m(\lambda)$ is constant. Notice that $\sigma(\Delta + D)$ is a finite disjoint union of such intervals.
The collection of $z_{\lambda, k}$ that remain in the semicircle for $\lambda \in I$ corresponds to a subset of eigenvalues of $D$, and thus to a spectral projection $P_I$ on $\mathbb C^m$.
We identify the range of $P_I$ with $\C^{m(\lambda)}$. We use the same notation for the projection on $\HH=\ell^2(\Z;\C^m)$ where $P_I$ acts as a (constant) multiplication operator. Introducing $\overline{P}_I := 1 - P_I$ we have the decomposition
\ben 
H = \Delta + D + q = \left[ \begin{array}{cc}
P_I (\Delta + D + q) P_I & P_I q \overline{P}_I  \\
\overline{P}_I  q  P_I &  \overline{P}_I (\Delta + D + q) \overline{P}_I
\end{array} \right] \,.
\een
Note that $\Delta_I:=P_I \,\Delta \,P_I$ is just the Laplace operator (\ref{def:Hamiltonian}) on $\ell^2(\Z;\C^{m(\lambda)})$. Furthermore, let $D_I$
be the restriction of $D$ onto $\C^{m(\lambda)}$. By Theorem~\ref{thm:mainstrip},
$P_I (\Delta + D + q) P_I = \Delta_I +  D_I +  P_I q P_I $ has almost surely ac spectrum on $I$, since $I \subset I_{D_I}$, see \eqref{def:I_D}. Since almost surely $q$ is in $\ell^2$ and thus decays at infinity the essential spectrum of  $\overline{P}_I (\Delta + D + q)
\overline{P}_I$ is contained in the complement of the interior of $I$.
Since $P_I q \overline{P}_I$ is Hilbert-Schmidt almost surely, we can apply Theorem \ref{thm:denisov} and hence conclude that almost surely  $I \subseteq \sigma_{\rm ac}(H + D + q)$. Repeating the above arguments for the remaining intervals of non-zero length in the decomposition of the spectrum of $\Delta + D$ yields the claim.\qed

\section*{Acknowledgement}

W.S. wants to thank the University of British Columbia for hospitality
and financial support. D. H. wants to acknowledge the summer research grant awarded by
the College of William \& Mary.

\section*{Appendix A: Proof of Theorem \ref{thm:fhs1}} 

Let us start with the following lemma.
\begin{lemma} \label{thm:fhs13} Suppose that
$|\lambda|, \| \delta_1 \| , \|\delta_2 \| \leq C$ and ${\rm Im \lambda}
\geq 1/C$. Then there exists a compact set $B \subset \SH_m$ (depending on
$C$) such that $\Phi_{\delta_1} \circ \Phi_{\delta_2}(\SH_m) \subseteq B$.
\end{lemma}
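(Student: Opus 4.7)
The plan is to show that the twice-iterated image $W_2 := \Phi_{\delta_1}(\Phi_{\delta_2}(Z))$ satisfies two kinds of uniform bounds independent of $Z \in \SH_m$: a norm bound $\|W_2\| \le C_1$ and a positive lower bound $\mathrm{Im}\,W_2 \ge c_1 I$. Any subset of $\SH_m$ of the form $\{Z = X + iY : \|X\|, \|Y\| \le C_1,\ Y \ge c_1 I\}$ is closed and bounded in the Euclidean topology on $\mathrm{Sym}(m)\oplus\mathrm{Sym}(m)$ and stays uniformly away from the boundary $\{\det Y = 0\}$, hence is compact in the hyperbolic metric $d$. Thus producing these two bounds is enough.

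\textbf{The norm bound after one iteration.} For any $Z\in\SH_m$ and real symmetric $\delta$, set $M := Z + \lambda - D - \delta$. Then
\[
\mathrm{Im}\,M \;=\; \mathrm{Im}\,Z \,+\, (\mathrm{Im}\,\lambda)\,I \;\ge\; (\mathrm{Im}\,\lambda)\,I \;\ge\; (1/C)\,I,
\]
so $M$ is invertible with $\|M^{-1}\| \le 1/\mathrm{Im}\,\lambda \le C$. Applying this with $\delta = \delta_2$ shows $\|W_1\| \le C$, where $W_1 := \Phi_{\delta_2}(Z)$; applying it again with $Z \leftarrow W_1$, $\delta = \delta_1$ gives $\|W_2\| \le C$. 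The hypothesis $|\lambda|\le C$ is not needed for these norm bounds but will be used in the next step.

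\textbf{The lower bound on $\mathrm{Im}\,W_2$.} Write $M_2 := W_1 + \lambda - D - \delta_1 = A_2 + i B_2$ with $A_2 := \mathrm{Re}\,W_1 + \mathrm{Re}\,\lambda - D - \delta_1$ and $B_2 := \mathrm{Im}\,W_1 + (\mathrm{Im}\,\lambda)I$. Since $B_2 > 0$, one can factor $M_2 = B_2^{1/2}(T + iI)B_2^{1/2}$ with $T := B_2^{-1/2} A_2 B_2^{-1/2} \in \mathrm{Sym}(m)$, and use $(T+iI)^{-1} = T(T^2+I)^{-1} - i(T^2+I)^{-1}$ to obtain the identity
\[
\mathrm{Im}(-M_2^{-1}) \;=\; B_2^{-1/2}\bigl(T^2 + I\bigr)^{-1} B_2^{-1/2}.
\]
From the previous step $\|W_1\|\le C$, so $\|B_2\| \le C + \mathrm{Im}\,\lambda \le 2C$ and $\|A_2\| \le C + |\lambda| + \|D\| + \|\delta_1\| \le 3C + \|D\|$. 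This yields $B_2^{-1/2} \ge (2C)^{-1/2} I$ and $\|T\| \le \|B_2^{-1}\|\,\|A_2\| \le C(3C + \|D\|)$, so $(T^2+I)^{-1}$ is bounded below by a positive multiple of $I$. Combining these gives $\mathrm{Im}\,W_2 = \mathrm{Im}(-M_2^{-1}) \ge c_1 I$ for a constant $c_1 > 0$ depending only on $C$ and $\|D\|$.

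\textbf{Conclusion and main obstacle.} The set $B := \{W = X+iY \in \SH_m : \|W\| \le C,\ Y \ge c_1 I\}$ is compact in $\SH_m$ and contains $\Phi_{\delta_1}\circ\Phi_{\delta_2}(\SH_m)$ for every admissible pair $(\delta_1,\delta_2)$. The only technical step is the explicit formula for $\mathrm{Im}(-M_2^{-1})$ when the real and imaginary parts of $M_2$ do not commute; once the symmetric conjugation trick $M_2 = B_2^{1/2}(T + iI) B_2^{1/2}$ is used, everything reduces to elementary operator inequalities.
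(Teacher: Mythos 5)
Your proposal is correct and follows essentially the same two-step strategy as the paper: the first application of $\Phi_\delta$ gives the uniform norm bound $\|M^{-1}\|\le (\mathrm{Im}\,\lambda)^{-1}\le C$, and the second gives a uniform lower bound on the imaginary part. The only (cosmetic) difference is that the paper obtains the latter from the identity $\mathrm{Im}\bigl[-M^{-1}\bigr]=(M^{*})^{-1}\,\mathrm{Im}(M)\,M^{-1}\ge \mathrm{Im}\,\lambda/\|M\|^{2}$, whereas you use the conjugation $M=B^{1/2}(T+iI)B^{1/2}$; both yield the same compact set.
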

\begin{proof}
Applying $\Phi_{\delta}$ once yields an upper bound on the norm, which can be
seen from the basic inequality
$$
\| ( Z + \lambda - D - \delta)^{-1} \| \leq ( {\rm Im} \lambda)^{-1}\,,\quad
Z\in\SH_m\,.
$$
Applying $\Phi_{\delta}$ a second time yields a lower bound on the imaginary
part, which can be seen by the following estimate 
$$ {\rm Im}\big[-(Z+\lambda - D -\delta)^{-1}\big]
\,=\, (Z^*+\lambda^* - D -\delta)^{-1} \,{\rm Im}(Z+\lambda)\,
  (Z+\lambda - D -\delta)^{-1}\,
\geq \,\frac{{\rm Im} \lambda}{\| Z + \lambda - D - \delta \|^2} \,.
$$
\end{proof}

\vspace{0.5cm}
\noindent {\it Proof of Theorem \ref{thm:fhs1}.} \\
\noindent
\underline{Step 1:} Let $B$ be a compact subset of $\SH_m$ as in the previous
Lemma \ref{thm:fhs13}. Then there exists a $\gamma < 1$, such that for all
$Z,W \in B$,
$$ {\rm d}(\Phi_\delta(Z), \Phi_\delta(W)) \leq \gamma \,{\rm d}(Z,W)\,.
$$
\vspace{0.5cm}

Using that the maps $Z \mapsto - Z^{-1}$ and $Z \mapsto Z - D - \delta$ are
hyperbolic isometries on $\SH_m$, we find that
$$
{\rm d}(\Phi_\delta(Z),\Phi_\delta(W)) = {\rm d}(Z + \lambda, W + \lambda)\,.
$$
In order to estimate the last expression we use that
$$ [{\rm Im}(W+\lambda)]^{-1} = [{\rm Im}(W)]^{-1/2}\,
\underbrace{[{\rm Im}(W)]^{1/2}\,[{\rm Im}(W+\lambda)]^{-1}\,
[{\rm Im}(W)]^{1/2}}_{\leq \sqrt{\gamma}}\,[{\rm Im}(W)]^{-1/2}
\leq \sqrt{\gamma} \;[{\rm Im}(W)]^{-1}
$$
for some real number $\gamma < 1$ since $W$ is in a bounded set. If we apply
this estimate also to $Z$ we obtain that
$$ {\rm d}(\Phi_\delta(Z),\Phi_\delta(W))\, = \,
 {\rm d}(Z + \lambda  , W + \lambda ) \,\leq \,\gamma \,{\rm d}(Z,W) \,.
$$

\vspace{0.5cm}

\noindent
\underline{Step 2:} The sequence $(\Phi_{q_0} \circ \cdots \circ \Phi_{q_n}
(\Lambda_n))_{n \in \N_0}$ converges to a limit independent of the choice of
$(\Lambda_n )_{n \in \N_0}$.

\vspace{0.5cm}

Suppose $(\widetilde{\Lambda}_n )_{n \in \N_0}$ is a different sequence. Then
\begin{equation} \label{eq:cauchy}
{\rm d}( \Phi_{q_0} \circ \cdots \circ \Phi_{q_n}(\Lambda_n) , \Phi_{q_0}
\circ \cdots \circ \Phi_{q_n}(\widetilde{\Lambda}_n) )
\leq \gamma^{n-2} C \to 0 \,, \quad ( n \to \infty ) ,
\end{equation}
with  $C := \sup_{(Z,W)\in B^2}{\rm d}(Z,W)$. We conclude that if the limit
exists it must be independent of the sequence $(\Lambda_n )_{n \in \N_0}$. On
the other hand the sequence $(\Phi_{q_0} \circ \cdots \circ \Phi_{q_n}
(\Lambda_n))_{n \in \N_0}$ is a Cauchy sequence, which can be seen by
inserting $\widetilde{\Lambda}_n :=  \Phi_{q_{n+1}} \circ \cdots \circ
\Phi_{q_{n+m}}(\Lambda_{n+m})$ for $m\in\N$ into \eqref{eq:cauchy}.

\vspace{0.5cm}

\noindent
The theorem now follows from Step 2 and Eq. \eqref{eq:iterate}.
\qed

\section*{Appendix B: Some inequalities}

\begin{lemma} \label{lem:ineqappendix}
Let $Z_i \in \mathbb{SH}_m, i\in\{0,1,2\}$ and $\delta \in {\rm Sym}(m)$. Then
\begin{itemize}
\item[(a)] ${\rm cd}(2 Z_0, Z_1 + Z_2 ) \leq \frac12 \big[{\rm cd}(Z_0, Z_1) +
{\rm cd}(Z_0, Z_2)\big]$.
\item[(b)] ${\rm cd}(Z_0, \delta + Z_1 ) \leq C  \,( 1 + \| \delta \|^2) \,
\big[{\rm cd}(Z_0, Z_1) + 1\big]$ for some constant $C$ that depends on the
(norm of the) imaginary part of $Z_0$.
\item[(c)] For $\lambda\in I_D$ (see definition (\ref{def:I_D})) there is a constant $C$ (depending on $\lambda$ and $m$)
so that ${\rm tr} \, ({\rm Im} Z_0) \leq C \,( {\rm cd}_\lambda(Z_0) + 1 )$.
\end{itemize}
\end{lemma}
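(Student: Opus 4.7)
\medskip
\noindent\textbf{Proof plan.}
All three parts of the lemma can be obtained by expanding the definition of ${\rm cd}$ into its real and imaginary pieces, combined with elementary operator inequalities and Cauchy--Schwarz.

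For part (a), I would set $A := Z_0 - Z_1$, $B := Z_0 - Z_2$. Since $2Z_0 - (Z_1+Z_2) = A+B$, ${\rm Im}(2Z_0) = 2Y_0$, and ${\rm Im}(Z_1+Z_2) = Y_1+Y_2$, the claim reduces, after clearing the factor $\tfrac12$, to the trace inequality
\begin{equation*}
{\rm tr}[Y_0^{-1}(A+B)^*(Y_1+Y_2)^{-1}(A+B)] \leq {\rm tr}[Y_0^{-1}A^*Y_1^{-1}A] + {\rm tr}[Y_0^{-1}B^*Y_2^{-1}B].
\end{equation*}
Since $Y_0^{-1}\ge 0$, this follows by tracing the operator inequality $(A+B)^*(Y_1+Y_2)^{-1}(A+B) \le A^*Y_1^{-1}A + B^*Y_2^{-1}B$, a non-commutative harmonic mean / Cauchy--Schwarz inequality. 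I would verify it by writing both sides in the block form $\begin{pmatrix}A\\B\end{pmatrix}^* M \begin{pmatrix}A\\B\end{pmatrix}$, with $M$ equal to ${\rm diag}(Y_1^{-1},Y_2^{-1})$ on the right and to $\begin{pmatrix}I\\I\end{pmatrix}(Y_1+Y_2)^{-1}\begin{pmatrix}I & I\end{pmatrix}$ on the left, and noting that after conjugation by ${\rm diag}(Y_1^{1/2},Y_2^{1/2})$ the right $M$ becomes the identity while the left $M$ becomes the orthogonal projection $\begin{pmatrix}Y_1^{1/2}\\Y_2^{1/2}\end{pmatrix}(Y_1+Y_2)^{-1}\begin{pmatrix}Y_1^{1/2} & Y_2^{1/2}\end{pmatrix}$, which is $\le I$.

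For part (c), the starting point is the algebraic identity already implicit in the chain \eqref{eq:traceineq} in the proof of Lemma~\ref{lem:2}: expanding $(Z_0 - Z_\lambda)^* Y_0^{-1}(Z_0-Z_\lambda)$ into its real and imaginary pieces gives
\begin{equation*}
{\rm cd}_\lambda(Z_0) + 2m = {\rm tr}[Y_\lambda^{-1}(X_0-X_\lambda)Y_0^{-1}(X_0-X_\lambda)] + {\rm tr}(Y_0^{1/2}Y_\lambda^{-1}Y_0^{1/2}) + {\rm tr}(Y_\lambda^{1/2}Y_0^{-1}Y_\lambda^{1/2}).
\end{equation*}
Dropping the first and third (non-negative) summands yields ${\rm tr}(Y_0^{1/2}Y_\lambda^{-1}Y_0^{1/2}) \leq {\rm cd}_\lambda(Z_0)+2m$, and combining with $Y_\lambda^{-1} \geq \|Y_\lambda\|^{-1}I$ I obtain ${\rm tr}(Y_0) \leq \|Y_\lambda\|\cdot{\rm tr}(Y_0^{1/2}Y_\lambda^{-1}Y_0^{1/2}) \leq \|Y_\lambda\|({\rm cd}_\lambda(Z_0)+2m)$, which is the claim on the interior of $I_D$.

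For part (b), let $V := Z_0 - Z_1$. Because $\delta \in {\rm Sym}(m)$, one has ${\rm Im}(\delta + Z_1) = Y_1$, and expanding the definition of ${\rm cd}$ gives
\begin{equation*}
{\rm cd}(Z_0, \delta + Z_1) = {\rm cd}(Z_0, Z_1) - 2\,{\rm Re}\,{\rm tr}[Y_0^{-1}\delta Y_1^{-1}V] + {\rm tr}[Y_0^{-1}\delta Y_1^{-1}\delta].
\end{equation*}
The Hilbert--Schmidt Cauchy--Schwarz inequality, applied to $Y_1^{-1/2}\delta Y_0^{-1/2}$ and $Y_1^{-1/2}VY_0^{-1/2}$, bounds the absolute value of the cross term by $\sqrt{{\rm cd}(Z_0,Z_1) \cdot {\rm tr}[Y_0^{-1}\delta Y_1^{-1}\delta]}$; AM--GM then yields ${\rm cd}(Z_0, \delta+Z_1) \leq 2\,{\rm cd}(Z_0,Z_1) + 2\,{\rm tr}[Y_0^{-1}\delta Y_1^{-1}\delta]$. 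Writing ${\rm tr}[Y_0^{-1}\delta Y_1^{-1}\delta] = \|Y_1^{-1/2}\delta Y_0^{-1/2}\|_{\rm HS}^2 \leq m\|Y_0^{-1}\|\,\|Y_1^{-1}\|\,\|\delta\|^2$, the remaining issue is to control $\|Y_1^{-1}\|$ in terms of ${\rm cd}(Z_0,Z_1)$, and this self-referential step is the main obstacle. I would resolve it by a bootstrap: the same identity as in (c), applied with $Z_\lambda, Z$ replaced by $Z_0, Z_1$, yields ${\rm tr}(Y_0^{1/2}Y_1^{-1}Y_0^{1/2}) \leq {\rm cd}(Z_0,Z_1)+2m$, whence $\|Y_1^{-1}\| \leq \|Y_0^{-1}\|({\rm cd}(Z_0,Z_1)+2m)$. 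Assembling the estimates produces the claim with a constant $C$ depending only on $\|Y_0^{-1}\|$ and $m$.
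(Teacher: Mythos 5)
Your proposal is correct and follows essentially the same route as the paper: part (a) rests on the same observation that $\bigl[Y_1^{1/2}(Y_1+Y_2)^{-1/2};\ Y_2^{1/2}(Y_1+Y_2)^{-1/2}\bigr]$ is an isometry (so its range projection is $\leq I$), part (c) is the identical expansion of $\mathrm{cd}_\lambda$ into real and imaginary parts followed by dropping the nonnegative summands, and part (b) is the same expansion plus Cauchy--Schwarz, culminating in the same bound $\mathrm{tr}(Y_0^{1/2}Y_1^{-1}Y_0^{1/2})\leq \mathrm{cd}(Z_0,Z_1)+2m$. The only cosmetic difference is in (b), where you bound $\|Y_1^{-1}\|$ and use the Hilbert--Schmidt norm (picking up a factor $m$), whereas the paper inserts $Y_0^{1/2}Y_0^{-1/2}$ and applies $\mathrm{tr}(EFE)\leq\|E^2\|\,\mathrm{tr}\,F$ directly; both lead to the same form of estimate.
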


\begin{proof}
(a). Let us define
$$ A:=\left[\begin{array}{c} U_1\\U_2\end{array}\right]\,,\quad
B:=\left[\begin{array}{c}Y_1^{1/2}(Y_1+Y_2)^{-1/2}\\Y_2^{1/2}(Y_1+Y_2)^{-1/2}
\end{array}\right]
$$
with $Y_i:={\rm Im}(Z_i)$ and
$U_i = Y_i^{-1/2}(Z_0-Z_i)Y_0^{-1/2}$.
Then 
\begin{eqnarray*}{\rm cd}(2Z_0, Z_1 + Z_2 )
= \frac12 {\rm tr }\Big[(U_1^*Y_1^{1/2} + U_2^*Y_2^{1/2})(Y_1+Y_2)^{-1}
(Y_1^{1/2}U_1+Y_2^{1/2}U_2)\Big]
= \frac12 {\rm tr }\big[A^*BB^*A\big]\,.
\end{eqnarray*}
Since $B^*B=1$, $BB^*$ is a projection and hence $BB^*\le 1$. Therefore,
${\rm tr }\big[A^*BB^*A\big]\le {\rm tr }\big[A^*A\big]={\rm tr }\big[U_1^*U_1
\big] + {\rm tr }\big[U_2^*U_2\big] = {\rm cd}(Z_0, Z_1) + {\rm cd}(Z_0, Z_2)$.
\\

(b). By expanding the product in the trace and using ${\rm tr}[A^* B] \leq
({\rm tr}|A|^2)^{1/2}  ({\rm tr} |B|^2)^{1/2} \leq \frac{1}{2}{\rm tr} |A|^2 +
\frac{1}{2}{\rm tr} |B|^2$ we obtain
\begin{align*}
{\rm cd}(Z_0, \delta + Z_1 ) &\leq 2\, {\rm cd}(Z_0, Z_1 ) + 2 \,{\rm tr}
( Y_0^{-1/2} \delta Y_1^{-1} \delta Y_0^{-1/2} )
\\
&\leq  2 \,{\rm cd}(Z_0, Z_1) + 2 \,\| Y_0^{-1/2}\delta Y_0^{-1/2}\|^2 \,
{\rm tr}(Y_0^{1/2} Y_1^{-1} Y_0^{1/2} )\,.
\end{align*}
Now we use $\| Y_0^{-1/2}\delta Y_0^{-1/2}\|\leq \|\delta\|\,
\|Y_0^{-1/2}\|^2$ and inequality \eqref{eq:traceineq}, i.e.,
${\rm tr}(Y_0^{1/2} Y^{-1} Y_0^{-1/2} ) \leq {\rm cd}(Z_0,Z_1) + 2 m$, which,
all put together, proves the claimed inequality.
\\

(c). We have ${\rm cd}_\lambda(Z_0) \geq {\rm tr}\big[Y_\lambda^{-1/2}
(Y_0-Y_\lambda) Y_0^{-1}(Y_0-Y_\lambda)Y_\lambda^{-1/2}\big]
\geq {\rm tr}\big[Y_\lambda^{-1/2} Y_0 Y_\lambda^{-1/2}\big] - 2m
\geq C'\, {\rm tr}(Y_0) - 2m$,
where the constant $C'$ depends on $\lambda$ throught the estimate on
$Y_\lambda^{-1}$. Finally, we choose the constant $C:=2m/C'$ and the stated
inequality follows.
\end{proof}

\section*{Appendix C: Proof of Theorem \ref{thm:denisov}}

To prove Theorem \ref{thm:denisov} we will use the following theorem by
Albeverio, Makarov, and Motovilov~\cite{AMM03}. For the convenience of the
reader we also present a proof that follows closely the one given in
\cite{AMM03} but uses analytic perturbation theory to obtain the graph
subspaces.

\begin{theorem}[Albeverio-Makarov-Motovilov]  \label{thm:albeverio}
Let $H_1$, $H_2$ be two bounded self-adjoint operators on the Hilbert spaces
$\HH_1$ and $\HH_2$, respectively. Assume that $\sigma(H_1) \subset (a, b)
\subseteq \rho(H_2)$ and $V:\HH_2 \to \HH_1$ is a Hilbert-Schmidt operator.
Let
\begin{equation} H_0 := \left[ \begin{array}{cc} H_1 &  0  \\ 0  &  H_2
\end{array} \right]\,,\quad W:=\left[ \begin{array}{cc} 0 &  V  \\ V^*  &  0
\end{array} \right]\,,\quad H_V:=H_0+W\,.
\end{equation}
Then $\sigma_{\rm ac}(H_V) = \sigma_{\rm ac}(H_0)$.
\end{theorem}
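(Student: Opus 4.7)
The strategy is to find a unitary $U$ on $\HH_1 \oplus \HH_2$ that block-diagonalizes $H_V$ as $U^* H_V U = \widetilde H_1 \oplus \widetilde H_2$, where $\widetilde H_i - H_i$ is \emph{trace class}, and then invoke the Kato-Rosenblum theorem to conclude $\sigma_{\rm ac}(H_V) = \sigma_{\rm ac}(\widetilde H_1)\cup \sigma_{\rm ac}(\widetilde H_2) = \sigma_{\rm ac}(H_0)$. Following AMM, I would produce $U$ from an $H_V$-invariant graph subspace $\mathcal{G}_Q := \{f \oplus Qf : f \in \HH_1\}$ for a bounded $Q:\HH_1 \to \HH_2$. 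Writing out $H_V \mathcal{G}_Q \subseteq \mathcal{G}_Q$ yields the operator Riccati equation
\begin{equation}\label{eq:rici}
Q H_1 - H_2 Q \ =\ V^* - Q V Q.
\end{equation}

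By self-adjointness of $H_V$, $\mathcal{G}_Q^\perp = \{-Q^* g \oplus g : g \in \HH_2\}$ is then also $H_V$-invariant, and orthonormalizing the two graph bases produces
\begin{equation*}
U \ =\ \begin{pmatrix} (I+Q^*Q)^{-1/2} & -Q^*(I+QQ^*)^{-1/2}\\[2pt] Q(I+Q^*Q)^{-1/2} & (I+QQ^*)^{-1/2}\end{pmatrix}.
\end{equation*}
A direct calculation then gives $\widetilde H_1 = (I+Q^*Q)^{-1/2}(H_1 + VQ + Q^*V^* + Q^*H_2 Q)(I+Q^*Q)^{-1/2}$ and a symmetric expression for $\widetilde H_2$. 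If $Q$ is Hilbert-Schmidt, then $Q^*Q$ and $QQ^*$ are trace class, so by continuous functional calculus $(I+Q^*Q)^{\pm 1/2}-I$ and $(I+QQ^*)^{\pm 1/2}-I$ are trace class; meanwhile each of $VQ$, $Q^*V^*$ and $Q^*H_2 Q$ is a product of two Hilbert-Schmidt factors and is therefore trace class. It follows that $\widetilde H_i - H_i$ is trace class, and Kato-Rosenblum closes the argument.

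The heart of the proof --- and the main obstacle --- is producing a Hilbert-Schmidt solution $Q$ to \eqref{eq:rici}. The plan is to use analytic perturbation theory for the family $H_\kappa := H_0 + \kappa W$ with $\kappa$ in a complex neighborhood of $[0,1]$. Pick a closed contour $\Gamma \subset \rho(H_0)$ in $\C$ encircling $\sigma(H_1)$; this is possible because the gap condition $\sigma(H_1) \subset (a,b) \subseteq \rho(H_2)$ lets $\Gamma$ lie in a thin strip around a neighbourhood of $\sigma(H_1)$ inside $(a,b)$. For $\kappa$ small, $\Gamma \subset \rho(H_\kappa)$, and the Riesz projection $P(\kappa) := -(2\pi i)^{-1}\oint_\Gamma (H_\kappa - z)^{-1}\, dz$ is analytic in $\kappa$, with $P(0) = I_{\HH_1}\oplus 0$. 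A Neumann expansion shows that in block form $P(\kappa) = \bigl(\begin{smallmatrix}A(\kappa) & B(\kappa)\\ C(\kappa) & D(\kappa)\end{smallmatrix}\bigr)$ one has $A(\kappa) = I + O(\kappa^2)$ (because $W$ is purely off-diagonal), while $C(\kappa)$ is a contour integral whose integrand carries $V^*$ as a factor and is therefore Hilbert-Schmidt. Setting $Q(\kappa) := C(\kappa)A(\kappa)^{-1}$ produces an analytic Hilbert-Schmidt family parameterizing $\ran P(\kappa)$ as $\mathcal{G}_{Q(\kappa)}$, and $Q(\kappa)$ solves \eqref{eq:rici} with $\kappa V$ replacing $V$. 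Extending this all the way to $\kappa = 1$ is the delicate step: although $\sigma_{\rm ess}(H_\kappa) = \sigma_{\rm ess}(H_0)$ since $W$ is compact, isolated eigenvalues of $H_\kappa$ may migrate into $(a,b)$ and collide with $\Gamma$. I would handle this by continuously deforming $\Gamma$ with $\kappa$ to avoid such eigenvalues, while maintaining an a priori Hilbert-Schmidt bound on $Q(\kappa)$ obtained by reading \eqref{eq:rici} as the fixed-point equation $Q = \mathcal{S}^{-1}(V^* - QVQ)$, where the Sylvester operator $\mathcal{S}:B \mapsto H_2 B - B H_1$ is boundedly invertible on $\mathcal{B}_2(\HH_1,\HH_2)$ thanks to $\sigma(H_1) \cap \sigma(H_2) = \emptyset$; these bounds close the continuation and yield the required $Q$ at $\kappa = 1$.
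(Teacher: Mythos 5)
Your block-diagonalization scheme (an $H_V$-invariant graph subspace for a Hilbert--Schmidt $Q$, orthonormalization via the polar decomposition, trace-class corrections on the diagonal, Kato--Rosenblum) is exactly the paper's Steps 2--4. The genuine gap is in how you propose to produce $Q$: you attempt to block-diagonalize $H_V$ itself, for the given $V$, by continuing the Riesz projection $P(\kappa)$ of $H_\kappa=H_0+\kappa W$ from $\kappa=0$ all the way to $\kappa=1$. This continuation does not close. The spectral subspace associated with the portion of $\sigma(H_\kappa)$ enclosed by your contour need not remain a graph over $\HH_1$ (equivalently, $p_1P(\kappa)p_1$ need not stay invertible), eigenvalues migrating across the gap make the choice of contour non-canonical, and your proposed a priori bound does not rescue this: the fixed-point map $Q\mapsto \mathcal{S}^{-1}(V^*-QVQ)$ is a contraction on a suitable ball only when $\|V\|$ is small compared with ${\rm dist}(\sigma(H_1),\sigma(H_2))$, which is precisely the regime you are trying to escape. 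In the spectral disposition at hand ($\sigma(H_1)$ sitting inside a finite gap of $\sigma(H_2)$), the operator Riccati equation is known to fail to have the required contractive/graph solution for sufficiently large perturbations, so no deformation of $\Gamma$ can force the construction through at $\kappa=1$ in general.

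The missing idea --- and the paper's very first move --- makes all of this unnecessary: $\sigma_{\rm ac}$ is invariant under finite-rank (indeed trace-class) perturbations, and finite-rank operators are dense in the Hilbert--Schmidt class in the Hilbert--Schmidt (hence operator) norm. Replacing $V$ by $V-F$ with $F$ finite rank changes neither $\sigma_{\rm ac}(H_V)$ nor $H_1$, $H_2$, so one may assume from the outset that $\|V\|$ is as small as one likes. Then a single fixed contour around $\sigma(H_1)$ lies in $\rho(H_V)$, the Riesz projection is a small Hilbert--Schmidt perturbation of $p_1$, its range is automatically a graph with Hilbert--Schmidt angular operator, and the remainder of your argument goes through verbatim. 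Everything in your write-up after the production of $Q$ is correct.
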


\begin{proof}
Since finite rank perturbations $F$ do not change the ac spectrum and
such operators are norm-dense in the space of Hilbert-Schmidt operators
we can replace $V$ by $V+F$ and achieve that its norm is small. Henceforth we
assume that $\| V \|$ is small. Let $\HH := \HH_1 \oplus \HH_2$, and for
$i=1,2$ we denote by $p_i$ the orthogonal projection of $\HH$ onto
$\HH_i$.

\vspace{0.5cm}

\noindent
\underline{Step 1:}  For $\|V\|$ sufficiently small, there exist orthogonal
projections $P_1$ and $P_2$ such that $P_1 + P_2 = 1$, $P_i H  = H P_i$, and
$p_i P_i p_i : \HH_i \to \HH_i$ is bijective, for $i\in\{1,2\}$. Furthermore,
$P_i - p_i$ is Hilbert-Schmidt and its norm can be made arbitrarily small
if we choose $\|V\|$ small enough.

\vspace{0.5cm}

Let $\Gamma_1$ [$\Gamma_2$] be  a counter-clockwise contour in $\C$ around the spectrum of $H_1$ [$H_2$] and contained in the resolvent set of $H_{2}$
[$H_1$]. Then we define the spectral projections
\begin{equation}
P_i := \frac{1}{2\pi i} \int_{\Gamma_i} \frac{dz}{z - H_V}\,,\quad i\in\{1,2\}\,.
\end{equation}
The first two properties follow from this representation.
In order to verify the other statements we use a similar representation for
the projections $p_i$ and the resolvent identity so that
$$ P_i - p_i = \frac{1}{2\pi i} \int_{\Gamma_i} (z - H_0)^{-1}
W (z - H_V)^{-1}\,{dz}\,.
$$
Hence, $P_i - p_i$ is Hilbert-Schmidt. If $\|V\|$ is small then
$\|p_iP_ip_i - p_i\|$ is small, too, and therefore $p_iP_ip_i$ can be inverted
on $\HH_i$.

\vspace{0.5cm}

\noindent
\underline{Step 2:}
For $\| V \|$ sufficiently small,  there exist  operators $Q_1 : \HH_1 \to
\HH_2$ and  $Q_2 : \HH_2 \to
\HH_1$  such that $\ran P_1 = \{(x,Q_1x) | x \in \HH_1\}$ and
$\ran P_2 = \{(Q_2x,x) | x \in \HH_2\}$. Moreover  $Q_2 =  -Q_1^*$ and  $Q_i$, for $i\in\{1,2\}$, is
Hilbert-Schmidt and its norm can be chosen arbitrarily small for  $\|V\|$
sufficiently small.

\vspace{0.5cm}

Let $i \neq j$. First observe that  the operator $p_j P_i = p_j (P_i - p_i)$ as well as its
adjoint are Hilbert-Schmidt  and can be made arbitrarily small by Step 1. Using the identity
$P_1 p_1 + P_2 p_2 = 1 - P_1 p_2 - P_2 p_1$ and noting that the r.h.s. can be made arbitrarily close to
one, we see that $\HH = \ran P_1 p_1 + \ran P_2 p_2$. This implies that $\ran P_i = \ran P_i p_i$.
Define $Q_i := p_j P_i (p_i P_i p_i)^{-1}$.  If we set
$x:=p_i (P_i z)$ for $z\in\HH_i$, then $p_j (P_i  z) = Q_i x$. Hence, the range of
$P_i p_i$ equals the graph of $Q_i$. The statement $Q_2 =  -Q_1^*$ follows by orthogonality.

\vspace{0.5cm}

\noindent
\underline{Step 3:} For $\|V\|$ sufficiently small $H_V$ is unitary equivalent
to
$$
\left[ \begin{array}{cc}
H_1  + T_1 &  0  \\
0  &  H_2 + T_2
\end{array} \right] \, ,
$$
where $T_i$ are trace class operators.

\vspace{0.5cm}

Since by Step 2, $H_V$ leaves the graphs of the operators $Q_1$ and $Q_2$
invariant, there exist operators $A_i \in \mathcal{\HH}_i$ such that
\begin{equation}\label{eq:Qdecop}
H_V ( 1 + Q) = (1 + Q ) A \ ,
\end{equation}
with
$$
Q := \left[ \begin{array}{cc}
0  &  Q_2  \\
Q_1  &  0
\end{array} \right] \quad {\rm and } \quad
A := \left[ \begin{array}{cc}
A_1  &  0  \\
0  &  A_2
\end{array} \right] \ .
$$
To construct these operators, let $x\in\HH_1$. Then by Step 2, $z:=(x,Q_1x)\in \ran P_1$. By
Step 1, $H_Vz\in \ran P_1$, and again by Step 2, $H_Vz=(y,Q_1y)$ for some
uniquely determined $y\in\HH_1$. This defines the operator $A_1:\HH_1\to\HH_1$
by setting $A_1x:=y$. A similar construction gives the operator $A_2$. As
a result we obtain (\ref{eq:Qdecop}). Expanding the product in (\ref{eq:Qdecop})
we see more concretely that $A_1 = H_1 + V Q_1$ and $A_2 = H_2 +  V^* Q_2$.
Since $Q$ has purely imaginary spectrum the operator  $1 + Q$ is bijective.
Using the polar decomposition $1 + Q = U |1+Q|$, with  $U$ unitary, we find
\begin{equation}\label{eq:UHU}
U^* H_V U =  | 1 + Q |  \;  A  \; |1 + Q|^{-1} \ .
\end{equation}
Note that
$$
|1 + Q| = \left[ \begin{array}{cc}(1+Q^*_1 Q_1)^{1/2} &  0  \\
0  & (1+ Q_2^* Q_2)^{1/2}\end{array} \right] \,.
$$
Using $0\le (1+Q^*_1 Q_1)^{1/2} - 1\le Q^*_1 Q_1$, the operator
$(1+Q^*_1 Q_1)^{1/2} \,(H_1 + V Q_1)\,(1+Q^*_1 Q_1)^{-1/2}-H_1$ is trace class.
A similar statement holds for the second diagonal operator on the right-hand
side of (\ref{eq:UHU}).
\vspace{0.5cm}

\noindent
\underline{Step 4:} $\sigma_{\rm ac}(H) = \sigma_{\rm ac}(H_0)$.

\vspace{0.5cm}

This follows from the result of Step 3 and the fact the trace class
perturbations preserve ac spectrum.

\end{proof}

Our proof of  Theorem \ref{thm:denisov} follows closely the one given by
Denisov~\cite{D06}, but uses almost analytic functional calculus
(cf.~\cite{Da95}) to control the function of an operator.

\vspace{0.5cm}

\noindent
{\it Proof of Theorem  \ref{thm:denisov}.}  Fix  $\epsilon > 0$.  We will
show that $[a+\epsilon,b-\epsilon ] \subset \sigma_{\rm ac}(H_V)$. Since
finite-rank perturbations do not change the ac and the essential spectrum
and $\sigma_{\rm ess}(H_2) \subset (-\infty, a] \cup [b, \infty)$, we can
assume w.l.o.g. that $\sigma(H_2) \subset  (-\infty, a +\epsilon/2]
\cup [b-\epsilon/2, \infty)$. \\

\vspace{0.5cm}

\noindent
\underline{Step 1:}  Define $\widehat{H}_1 := H_1 \chi_{[a+\epsilon,b-
\epsilon]}(H_1)$.  Then $[a + \epsilon, b -  \epsilon ] \subseteq
\sigma_{\rm ac}(\widehat{H}_V)$, where $ \widehat{H}_V := \left[ \begin{array}{cc}
 \widehat{H}_1 &  V  \\ V^*  &  H_2 \end{array} \right]$.

\vspace{0.5cm}

This follows directly from  Theorem \ref{thm:albeverio} by noting that
$\sigma(\widehat{H}_1) = [a+\epsilon,b-\epsilon]\subset \rho(H_2)$.
\vspace{0.5cm}

\noindent
\underline{Step 2:}  Let $f \in C_0^\infty([a + \epsilon, b - \epsilon])$.
Then  $f(H_V) - f(\widehat{H}_V)$ is trace class.

\vspace{0.5cm}

To show Step 2 we use almost analytic functional calculus. Let $\widetilde{f}
\in C_0^\infty(\C)$ be an almost analytic extension of $f$, satisfying
$\widetilde{f} |_{\R} = f$, $\widetilde{f}(x+iy) = 0$ if $x \notin  \supp f$,
and that for some  constant $C$ we have $\left|\partial_{\overline{z}}
\widetilde{f}(z) \right|  \leq C |{\rm Im} z |^3$ for all $z \in \mathbb{C}$.
[Of course, $\partial_{\overline{z}} := \frac12(\partial_{x}
+ i \partial_y)$ for $z=x+i y\in\C$ and $\overline{z}:=x-i y$.] Setting $R(A,z)
:= ( z - A)^{-1}$ for a bounded self-adjoint operator $A$ we recall the
Helffer--Sj\"ostrand formula (see \cite[(5)]{Da95})
\begin{equation}\label{HS}
f(A):=-  \frac{1}{\pi} \int_{\C}  \partial_{\overline{z}} \widetilde{f}(z)\,
R(A,z)\, dxdy\,.
\end{equation}
Setting $W := \left[ \begin{array}{cc} 0 & V \\ V^* & 0 \end{array}\right]$ and $\widehat{H}_0  := \left[ \begin{array}{cc} \widehat{H}_1 & 0 \\ 0  & H_2 \end{array}\right]$,
and applying the resolvent identity twice, we find
\begin{align*}
f(H_V) - f(\widehat{H}_V)
 &= - \frac{1}{\pi} \int_{\C} \partial_{\overline{z}} \widetilde{f} (z)
 \left[ R(H_0,z)  -  R(\widehat{H}_0,z) \right] \,  dx dy \\
 &  -\frac{1}{\pi} \int_{\C} \partial_{\overline{z}} \widetilde{f} (z)
 \left[ R(H_0,z)  W R(H_0,z)  -  R(\widehat{H}_0,z) W R(\widehat{H}_0,z)
 \right] \,  dx dy \\
 & -  \frac{1}{\pi}  \int_{\C} \partial_{\overline{z}} \widetilde{f} (z)
 \left[ R(H_0,z)  W R(H_0,z)  W R(H,z) - R(\widehat{H}_0,z)  W
 R(\widehat{H}_0,z)  W R(\widehat{H},z) \right] \,  dx dy \,.
\end{align*}
The first term on the right-hand side equals $f(H_0) - f(\widehat{H}_0)$ and
thus vanishes. The third term on the right-hand side is a trace class operator.
The second term also vanishes, as we now show. It is a combination of
terms of the following form,
\begin{align*}
- \frac{1}{\pi} \int_{\C} \partial_{\overline{z}} \widetilde{f} (z)  R(H_1,z)
V  R(H_2,z)\,dx dy
 &= - \frac{1}{\pi} \int_{\C} \partial_{\overline{z}} \widetilde{f} (z)
 R(H_1,z)  V \int_{\sigma(H_2)} \frac{1}{z - t } \, d P_{H_2}(t) \,dx dy \\
 &=  \int_{\sigma(H_2)}  f(H_1) \frac{1}{H_1-t}  V  \, d P_{H_2}(t) \,.
\end{align*}
In the first line we have applied the Spectral Theorem for $R(H_2,z)$ with
the spectral projections $P_{H_2}$ of $H_2$. In the last equality we have used that $\tilde{f}(z)(z-t)^{-1}$ is an almost analytic extension of $f(x)
(x - t)^{-1}$ since $\partial_{\overline{z}} (z-t)^{-1} = 0$ for  $z \neq t$,
and the Helffer--Sj\"ostrand formula (\ref{HS}).
By inspection,  the right-hand side of the last displayed formula does not
change if we replace $H_1$ by $\widehat{H}_1$.

\vspace{0.5cm}

\noindent
\underline{Step 3:}  $[a+\epsilon,b-\epsilon ] \in \sigma_{\rm ac}(H_V)$.
\vspace{0.5cm}

By the statement of Step 2 and the Kato-Rosenblum Theorem
\cite[Theorem XI.8]{RS3} we know that
$\sigma_{\rm ac}(f(H_V)) = \sigma_{\rm ac}(f(\widehat{H}_V))$ for all
$f \in C_0^\infty([a+\epsilon,b-\epsilon])$. By the Spectral Theorem we
conclude that $\sigma_{\rm ac}(H_V) \cap [a+\epsilon,b-\epsilon] =
\sigma_{\rm ac}(\widehat{H}_V) \cap [a+\epsilon,b-\epsilon] = [a+\epsilon,b-
\epsilon]$, where the second equality follows from the result of Step 1.\qed

\end{document}